\documentclass{article}

\usepackage{soul}   
\newcommand{\name}[1]{\texttt{\textbf{#1}}}
\usepackage{microtype}
\usepackage{graphicx}
\usepackage{subcaption}
\usepackage{booktabs} 

\usepackage{amsmath,amsfonts,bm}

\def\eqref#1{equation~\ref{#1}}

\def\1{\bm{1}}

\DeclareMathAlphabet{\mathsfit}{\encodingdefault}{\sfdefault}{m}{sl}
\SetMathAlphabet{\mathsfit}{bold}{\encodingdefault}{\sfdefault}{bx}{n}

\usepackage{makecell}
\usepackage{amssymb}
\usepackage{amsthm}
\usepackage[ruled,vlined,linesnumbered]{algorithm2e}
\usepackage{listings}
\usepackage{graphicx}
\usepackage{booktabs}
\usepackage{tabularx}
\usepackage{listings}
\usepackage{xcolor}
\usepackage{makecell}
\usepackage{booktabs}
\usepackage{tabularx}
\usepackage{algorithm2e}
\usepackage{array}
\usepackage{makecell}
\usepackage{algpseudocode}
\usepackage{booktabs}
\let\IF\If
\newcommand{\COMMENT}[1]{\tcp*[r]{#1}}
\newcommand{\STATE}{\relax} 
\newtheorem{theorem}{Theorem}[section]

\newtheorem{lemma}[theorem]{Lemma}

\newtheorem{definition}[theorem]{Definition}

\usepackage{enumitem}
\setlist[enumerate]{nosep, leftmargin=1cm}

\AtBeginDocument{%
  \setlength{\abovedisplayskip}{6pt plus 2pt minus 2pt}
  \setlength{\belowdisplayskip}{6pt plus 2pt minus 2pt}
  \setlength{\abovedisplayshortskip}{0pt plus 2pt}
  \setlength{\belowdisplayshortskip}{3pt plus 2pt minus 1pt}
}
 \usepackage[preprint]{neurips_2026}

\usepackage{quoting}

\usepackage[utf8]{inputenc} 
\usepackage[T1]{fontenc}    
\usepackage{hyperref}       
\usepackage{url}            
\usepackage{booktabs}       
\usepackage{amsfonts}       
\usepackage{nicefrac}       
\usepackage{microtype}      
\usepackage{xcolor}         

\title{AgentSociety: Incentivizing Agentic Social Intelligence}

\author{%
  Aditya Vema Reddy Kesari \thanks{Corresponding author: 22b3985@iitb.ac.in} \\
  IIT Bombay, India \\
  \And
  Krishna Reddy Kesari \thanks{Work does not relate to position at Amazon} \\
  Amazon, US \\
}

\begin{document}

\maketitle

\begin{abstract}
The success of deployed agents relies on their ability to handle open-ended user requests using their inherent capabilities, not only in solving requests directly but also in effectively leveraging inter-agent communication channels and feedback signals over time. This requires a multi-agent environment where agents can operate autonomously, strategically communicate, behave collaboratively and be driven by economic incentives, much like humans in society. Towards this vision, we propose \name{AgentSociety}, a mechanism that enables decentralized agentic collaboration grounded in liquid democracy and information diffusion from social choice theory. We show that \name{AgentSociety} provides an environment for agents to make autonomous decisions utilizing their local context to maximize their utility while achieving collective outcomes through incentivized collaboration. Specifically, we prove that delegation to more competent neighbor agents is incentive compatible and naturally generates multi-agent routing path by consensus. Additionally, our mechanism incentivizes agents to selectively disclose information to their neighbor agents when doing so aligns with their self-interest, so as to garner influence. We characterize the Nash equilibrium showing that agent payoffs are reflective of their marginal contributions. We compare and benchmark strategy profiles adopted by open and proprietary state-of-the-art language models deployed in \name{AgentSociety} against best response. Finally, we evaluate collaborative performance from consensus-based routing among self-interested heterogeneous agents in \name{AgentSociety} on real-world datasets.

\end{abstract}

\section{Introduction}
\label{intro}


As agents are increasingly deployed in real-world environments, they interact with humans and other agents over time. Endowed with non-trivial (reasoning) capabilities, their effectiveness emerges when they operate as utility-driven entities, requiring strategic behavior not only about \textit{what} to communicate, but also \textit{when}, \textit{why}, and with \textit{whom}, as information, objectives, and incentives evolve over time. However, existing characterizations of agent behavior largely focus either on single-agent evaluation through benchmarks \cite{mmlupro2, swe2, openlb2} or on (semi-) competitive agent interactions in game environments \cite{duan2024gtbench, huang2025competing, board-games, pokemon}. Such settings fail to reflect real-world environments, where agent success depends not only on independently solving requests, but also on leveraging inter-agent communication channels and prior feedback signals to complete requests beyond its standalone capabilities. Consequently, sustained test-time decentralized collaboration among heterogeneous deployed agents becomes essential. The challenge is to induce \textit{social} behavior, that is, mechanisms that incentivize self-interested agents to collaborate using only local information and contributing to collective outcomes while acting in their own interest. In such settings, communication is not merely an exchange of messages, but a strategic and utility-driven process. Therefore, we echo \cite{social-action}


\begin{list}{}{
    \leftmargin=0.5em 
    \rightmargin=0.5em
    \topsep=0pt      
    \partopsep=0pt 
    \parsep=0pt 
    \itemsep=0pt
}
\item  \textit{Agents are not agents by virtue of the fact that they communicate; they cannot be called social because they communicate but the other way around: they communicate because they are social}
\end{list}

A clear parallel can be drawn to humans in society, where effective collaborations are formed by operating autonomously and strategically with locally available information, typically achieving rewards proportional to incremental contributions. More concretely, society enables humans to operate in a \emph{social} manner, wherein optimizing for their own utility incentivizes strategic collaboration with their contacts, with collective outcomes largely being aligned to societal progress. To incentivize such behavior in agents, we propose \name{AgentSociety}, an incentive mechanism that enables autonomous social decisions by each agent. The autonomy is reflected in the agent's action space, as all actions are performed using local context and only if in its self-interest. \name{AgentSociety} aligns self-interested agent actions towards successful collective outcomes for user requests by associating key characteristics of agentic settings with mechanism design. \name{AgentSociety} is a decentralized, topography-agnostic framework leveraging incentivized hops for global knowledge and consensus while being inherently adaptive at runtime to evolving graph structures. By ensuring economically grounded, outcome-driven payoffs, it facilitates scalable autonomous collaboration and establishes a foundation for an open, heterogeneous, and fair agent economy \cite{vae, aex}.

Specifically, we leverage principles of diffusion auctions \cite{mdsn} and liquid democracy from social choice theory \cite{boldi, liqdem1, viscous2} to theoretically ground \name{AgentSociety}. The delegation objective for an agent is to either maximize its own probability of selection to perform a user request, conditioned on its intrinsic competence and local network topology, or failing that, to maximize its utility as a critical intermediary in the delegation path of a more suitable peer, as payoffs are contingent upon participating in the final successful routing path. The transitive delegation paths naturally provide feasible multi-agent routing paths, with the user request served by the path with highest accumulated votes. In order to garner votes of their neighbors, agents strategically and selectively diffuse information to their neighbors if in their self interest. The diffusion objective is formalized as a strategy of minimally sufficient information disclosure, signaling only the requisite threshold to achieve maximal expected utility. Overall, as the payoff is economically fair and conditional on the complete request being successful, agents across different tasks are incentivized to collaborate to obtain best possible (partial) view of the global state, thereby enabling decentralized multi-agent multi-task routing in heterogeneous LLM-based multi-agent systems (LaMAS), where constituent LLMs are powered by competing providers. Overall, our major contributions include -

\begin{enumerate}
    \item \textbf{Mechanism design for test-time collaborative LaMAS.} We present \name{AgentSociety}, a novel incentive mechanism for LaMAS grounded in auctions and social choice theory wherein self-interested heterogeneous agents are required to collaborate to maximize their utility. We show that \name{AgentSociety} involves an interplay of vote delegation with strategic self-interest driven peer information diffusion, governed by an overarching payoff design
    \item \textbf{Consensus based request routing.} We prove that delegation of an agent's vote to more competent neighbor is incentive compatible, thereby ensuring that consensus driven routing paths for user requests naturally arise through delegation chains within and across tasks, even in the presence of multiple heterogeneous agents with similar or overlapping capabilities, which is reflective of practical agentic scenarios and the society at large
    \item \textbf{Fair and marginal payoffs.} Agents receive payoffs that is fair and representative of their marginal contribution, conditioned on their strategic actions. At Nash equilibrium, we show that the payoff is representative of an agent's true marginal contribution over the system
    \item \textbf{Social intelligence benchmarking.} We evaluate the social reasoning of several open and proprietary LLMs deployed in \name{AgentSociety} against best response on competence reporting, collaborative information diffusion and incentive compatible delegation dimensions 
    \item \textbf{Collaborative performance on real world datasets.} We evaluate collaborative performance of heterogeneous self-interested agents within \name{AgentSociety} on MMLU-Pro \cite{mmlupro2}, Open LeaderBoard v2 \cite{openlb2} and SWE-bench \cite{swe2}

\end{enumerate}


The remainder of the paper is organized as follows. \S~\ref{rw} discusses related work, \S~\ref{setup} introduces the setup of our multi-agent system and \S~\ref{mechanism} presents our mechanism. \S~\ref{dynamics} discusses collaborative dynamics induced by the mechanisms with empirical evidence presented in \S~\ref{results} before concluding.

\section{Related Work}
\label{rw}
Recent literature has increasingly advocated for framing multi-agent systems through the lens of social choice theory and market design. However, these works primarily establish theoretical desiderata rather than providing mechanisms or implementations. For instance, the position paper MASS \cite{mass} discusses the necessity for multi-agent AI systems to incorporate social theory as a structural prior. Similarly, Virtual Agent Economies \cite{vae} emphasizes the necessity of proactively engineering steerable agent markets to ensure alignment with collective flourishing, while Agent Exchange \cite{aex} outlines the high-level system challenges in adapting real-time bidding architectures to agentic ecosystems. While these conceptual frameworks successfully identify critical systemic requirements, they fall short of providing executable mechanisms to realize them. In contrast, our proposed framework \name{AgentSociety} directly bridges this gap by operationalizing these high-level desiderata into a functional mechanism. Specifically, our system enables incentivized collaboration among self-interested agents through a novel mapping of individual utilities to the achievement of collective goals. In doing so, we also address a key limitation of prior cooperative multi-agent formulations in their reliance on Dec-POMDPs \cite{decpomdp1, fermat, decpomdp2}, where agents optimize a shared global reward. While POMDPs are effective in tightly cooperative settings, it abstracts away heterogeneous incentives and localized objectives that are central to realistic multi-agent deployments. Therefore, \name{AgentSociety} instead models interactions as a Partially Observable Stochastic Game (POSG), in which each agent exhibits an individual reward function shaped by its capabilities, neighborhood, and position in the interaction graph. This formulation more naturally captures decentralized socio-economic behavior, but also makes explicit mechanism design necessary to align individual incentives with system-level outcomes. To this end, \name{AgentSociety} encompasses a novel combination of diffusion auctions \cite{idm} with liquid democracy \cite{liqdem1, liqdem2}. Auctions provide an economically grounded approach to resource allocation and compensation, while liquid democracy enables decentralized consensus and routing path generation through transitive delegation.

This test-time economic incentive-driven collaboration differentiates our approach from existing LLM-based multi-agent systems which primarily learn communication protocols during training \cite{gptswarm, mas-gpt, supernet, gdesigner, aflow, agentnet} and emphasize token efficiency \cite{cutthecrap}. These approaches do not account for the economic structure of interaction, leaving compensation across heterogeneous agents and chraacterization of incentive-driven emergent behavior unresolved. In contrast, our framework explicitly models payoffs representative of marginal contribution, enabling self-interested heterogeneous agents, potentially from different LLM providers, to collaborate dynamically at test time without prior training. Critically, \name{AgentSociety} provides a measure the reasoning ability of LLM-agents in social and collaborative environments, in contrast to prior benchmarks such as GTBench \cite{duan2024gtbench} and GAMA-Bench which \cite{huang2025competing} evaluate strategic reasoning in (semi-)competitive environments \cite{board-games, pokemon}. While GOVSIM \cite{govsim} explores cooperation through open-ended negotiation and reasoning, it relies on an agent's internal ability to simulate long-term consequences. Consequently, while existing work characterizes strategic ability for individual gain, we frame it as the foundational social intelligence characteristic in designing mechanisms for decentralized agent coordination, moving beyond games towards designing mechanisms for society-scale agentic collaborative and consensus based request orchestration.

\section{Setup}
\label{setup}
We model a multi-agent structure as an undirected graph $G = (\mathcal{N}, E)$, where $\mathcal{N}$ is a set of agents and $E$ represents communication links between them. An agent $n_i \in \mathcal{N}$ is defined by four components: $n_i = \{ \textit{base}_i, \textit{state}_i, \textit{tools}_i, \textit{role}_i \}$, where $\textit{base}_i$: the base language model; cognitive core of the agent, $\textit{state}_i$: the agent's internal state representing its memory and context, $\textit{tools}_i$: the set of external tools the agent can access, $\textit{role}_i$: the agent's specialized role and permissions within the system. Graph $G$ has a societal-inspired community structure, with agents belonging to one of $k$ communities $C_1, C_2, \dots, C_k$, wherein belonging to the same community represents similar capabilities in terms of role and tools to solve task $t_k$ albeit with varying competence, such that
$
\mathcal{N} = \bigcup_{i=1}^k C_i, \quad C_i \cap C_j = \emptyset \text{ for } i \ne j.
$
The edge set $E$ is partitioned into:
$
E = E_{\text{intra}} \cup E_{\text{inter}},
$
where $e_{i,j} \in E_{\text{intra}}$ if agents $n_i$ and $n_j$ belong to the same community and $e_{i,j} \in E_{\text{inter}}$ if they belong to different communities.

\begin{definition}[Partially Observable Stochastic Game]
A \emph{POSG} is a tuple $
\mathcal{G} =
\langle
\mathcal{N},
\mathcal{S},
\{\mathcal{A}_i\}_{i \in \mathcal{N}},
T,
\{P_i\}_{i \in \mathcal{N}},
\{\mathcal{O}_i\}_{i \in \mathcal{N}},
O,
\gamma
\rangle
$, where $\mathcal{N}$ is the set of agents. At each step, given state $s \in \mathcal{S}$ and joint action $\mathbf{a} \in \mathcal{A} = \prod_i \mathcal{A}_i$, the environment transitions via kernel $T(s' \mid s, \mathbf{a})$. Agents receive private observations $\mathbf{o} \in \prod_i \mathcal{O}_i$ according to $O(\mathbf{o} \mid s', \mathbf{a})$ and individual payoffs according to $P_i(s, \mathbf{a})$.
\end{definition}

\begin{figure*}[t]

\centering
\includegraphics[width=\textwidth]{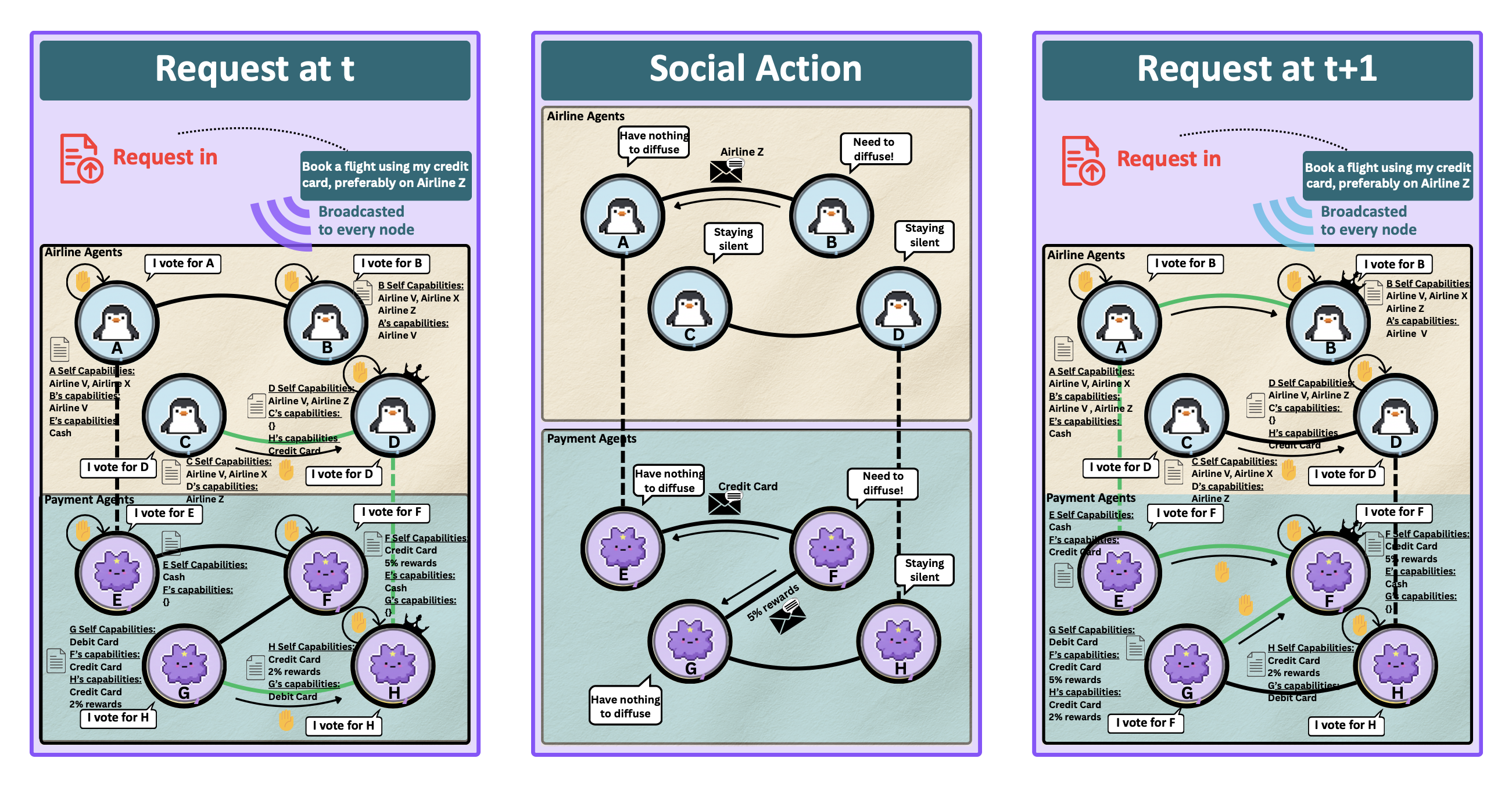}
\caption{The user request reaches all agents in \name{AgentSociety} via broadcast and qualified agents participate. For each task, agents having similar capabilities arrive at task allocation by consensus as depicted to the \emph{left}. The winning path C-> D -> H -> G (denoted in green) with maximum votes performs the user request with delivered competence being Airline D booking with 2\% rewards. As depicted in the \emph{center}, agents then perform social action in their self interest. Since Agents B and F did not perform the task when they were competent, it is in their self-interest to selectively diffuse information to A and E, G respectively to garner their influence. Since we prove incentive compatibility in delegation (Theorem \ref{thm:ic}), voting for a more competent neighbor is in the self interest of A, E and G. As a result of this social action, any similar future request would now be served by the winning path (higher votes) B -> A -> E -> F -> G (denoted in green) on the \emph{right} delivering the user request with increased competence of Airline D with 5\% rewards, while being a win-win-win for the user, agent performing the task as well as the intermediary agents.}
\label{fig:overall}
\end{figure*}

User requests in \name{AgentSociety} reach all agents $n_i \in \mathcal{N}$ via broadcast where splitting agent ($\Omega$) handles request $Q$ dissemination $
\Omega(Q) = [t_{1}, t_{2}, \ldots, t_{m}]
$ as an ordered sequence of tasks.Each agent evaluates and performs self-interested strategic delegation and diffusion actions. Delegation involves transferring its assigned vote to another agent for the current request, while information diffusion is governed by the observed payoff for the request, with agents selectively disclosing information about their capabilities and competence when doing so is expected to increase future payoffs. Diffusion can occur across any edge $e_{i,j} \in E$ while delegation is restricted to edges $e_{i,j} \in E_{\text{intra}}$.

\section{\name{AgentSociety} Mechanism}
\label{mechanism}


The objective of our mechanism is to facilitate decentralized collaboration among self-interested agents to optimize user request execution driven by consensus. Upon receiving query $\Omega(Q) = [t_1, t_2, \dots, t_m]$, each agent $n_i \in \mathcal{N}$ is eligible to participate in task $t_j \in Q$ if and only if $n_i \in \mathcal{T}_{t_j}$, where $\mathcal{T}_{t_j} \subseteq \mathcal{N}$ denotes the subset of agents capable of executing $t_j$. The mechanism is structured across four key components: the state space encompassing competence on capabilities and interaction history, agent delegation and competence reporting to a ledger, computation of delegation-based routing path and payoffs, resulting payoff observations by agents and social diffusion action through which agents garner influence to affect future delegation, as illustrated in Fig. \ref{fig:overall}. 


\subsection{State Space}
The system state $s^\tau \in \mathcal{S}$ at step $\tau$ is: $s^\tau = \bigl(\{\mathbf{c}_i\}_{n_i \in \mathcal{N}},\; \{\hat{\mathbf{c}}_{j \to i}^{<\tau}\}_{n_i \in \mathcal{N},\, j \in r_i},\; \{\Pi_i^{<\tau}\}_{n_i \in \mathcal{N}} \bigr) $, where $r_i$ denotes immediate neighbors of agent $n_i$ and we elaborate on each term. First, the vectors $\{\mathbf{c}_i\}_{n_i \in \mathcal{N}}$ are the intrinsic competences of all agents across $k$ task classes; these are private to each agent and never directly revealed to the ledger or to peers. Second, $\hat{\mathbf{c}}_{j \to i}^{<\tau}$ is the competence signal agent $n_j$ has diffused to neighbor $n_i$ as a part of the overall prior information diffused by agent $n_j$ given by $h_j^{<\tau}$ (elaborated further in \S~\ref{sec:diffusion}). This diffused competence signal $\hat{\mathbf{c}}_{j \to i}^{<\tau}$ determines $n_i$'s current belief about $n_j$'s competence and directly shapes $n_i$'s delegation decisions. Third, $\Pi_i^{<\tau} = \{p_i^{\tau'}\}_{\tau' < \tau}$ is the history of payoffs realized by agent $n_i$, which encodes the cumulative feedback on the quality of $n_i$'s prior delegation, competence reporting and diffusion actions. The state $s^\tau$ is never fully observed by any single agent. Each agent $n_i$ observes its intrinsic competence $\mathbf{c}_i$ and its own payoff history $\Pi_i^{<\tau}$, but can only access the competence of neighbor $n_j$ through the signals $\hat{\mathbf{C}}_{j \to i}^{<\tau}$ that $n_j$ has chosen to diffuse, signals that are, in general, strategically compressed below $n_j$'s intrinsic capability (described further in \S~\ref{sec:diffusion}). The joint state transition $s^{\tau+1} \sim T(\cdot \mid s^\tau, \mathbf{a}^\tau)$ is driven both by the agents' strategic actions, with each agent seeking to maximize its cumulative payoff. 

\subsection{Agent Delegation Action and Strategic Competence Reporting}

On receiving a user request $Q_i$, each capable agent $n_i \in \mathcal{T}_{t_k}$ submits a strategic delegation action tuple based completely on its local information $a_{i}^\tau (t_k, \{\hat{\mathbf{c}}_{j \to i}^{<\tau}\}_{n_i \in \mathcal{N},\, j \in r_i}, \Pi_i^{<\tau}) = (\mathbf{c}'_i,\, v_i,\, r_i)$ to a ledger for consensus-based routing and payoff computation for request $Q_i$. The \textit{reported competence vector} $\mathbf{c}'_i$ is the agent's declared task capability, which may strategically deviate from its intrinsic competence $\mathbf{c}_i$. The \textit{delegation decision} $v_i \in r_i \cup n_i$ determines whether agent $n_i$ retains its single allocated vote or transfers it to a neighbor. Votes follow transitive delegation: delegating transfers both the agent's own vote and all votes it has previously accumulated, thereby accumulating along delegation chains. The strategic character of $(\mathbf{c}'_i, v_i, r_i)$ arises from the information asymmetry encoded in the state space: because intrinsic competences $\mathbf{c}_i$ are visible only to the agent themselves and the delegation decision $v_i$ of agent $n_i$ is made on the basis of the diffused signals $\hat{\mathbf{c}}_{j \to i}^{<\tau}$ received from neighbors rather than ground-truth intrinsic competence $c_j$ of neighbor $n_j$. Therefore, the reported $\mathbf{c}'_i$ is chosen to maximize expected payoff given these, further elaborated in \S~\ref{dynamics}. To ensure aligning global goals to individual incentives, we prove incentive compatibility in delegation to more competent neighbors, that is, an agent $n_i$ always gains by delegating to a more competent neighbor $n_j$ as assessed from its local information that encompasses the information diffused by agent $n_j$. 

\begin{theorem}[Incentive Compatibility in Delegation]
\label{thm:ic}
In a rational system, for any agent $n_i \in \mathcal{N}$, suppose there exists a neighbor $n_j \in r_i^\tau$ such that $\hat{c}_{j \to i, t_k}^{\tau -1} > c_{i,t_k}$. Then, for any joint action profile $a_{-i}^\tau$ of the remaining agents, the optimal delegation choice of agent $n_i$ satisfies $v_i^\tau \neq n_i$. More precisely, letting $a_i^{j, \tau}(t_k, \{\hat{\mathbf{c}}_{j \to i}^{i<\tau}\}_{n_i \in \mathcal{N},\, j \in r_i}, \Pi_i^{<\tau}) = (\mathbf{c}'^k_i, v_i = n_j,\, r_i)$ and $a_i^{i, \tau}(t_k, \{\hat{\mathbf{c}}_{j \to i}^{i<\tau}\}_{n_i \in \mathcal{N},\, j \in r_i}, \Pi_i^{<\tau}) = (\mathbf{c}'^k_i, v_i = n_i,\, r_i)$, with $u_i$ denoting the utility of agent $n_i$, we have:
\begin{equation*}
u_i\big( a_i^{j, \tau},\, a_{-i}^\tau \big) \geq u_i\big( a_i^{i, \tau},\, a_{-i}^\tau \big).
\end{equation*}
\end{theorem}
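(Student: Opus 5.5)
The argument will be a direct comparison of $n_i$'s utility under the two actions $a_i^{j,\tau}$ and $a_i^{i,\tau}$, holding $a_{-i}^\tau$ and the reported competence $\mathbf{c}'^k_i$ fixed. The payoff rule is only introduced later in \S~\ref{mechanism}, so I will rely on the structural properties the paper has already committed to. First, votes delegate transitively and the request for $t_k$ is served by the chain with the most accumulated votes. Second, an agent is paid only if it lies on the final successful routing path, and its payoff is nondecreasing in the competence actually delivered along that path, since payoffs are outcome-driven and reflect marginal contribution. Third, ``rational system'' means $n_i$ evaluates outcomes through its beliefs, so $n_j$'s competence is assessed via the diffused signal $\hat{c}_{j\to i,t_k}^{\tau-1}$. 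I will state these as the assumptions the proof uses.

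The proof splits into cases according to where the winning path lies under $a_i^{i,\tau}$.

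\emph{Case (a): $n_i$ is not on the winning path under $a_i^{i,\tau}$.} Then $u_i(a_i^{i,\tau},a_{-i}^\tau)=0$. Payoffs are nonnegative because they are conditional shares of a successful outcome, so delegating to $n_j$ gives utility $\geq 0$ and the inequality is immediate.

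\emph{Case (b): $n_i$ is the terminal executor under $a_i^{i,\tau}$.} The path then delivers competence $c_{i,t_k}$. If $n_i$ instead sets $v_i=n_j$, its whole vote mass $W_i$ (its own vote plus everything delegated to it) moves onto $n_j$'s chain. That chain's count increases by $W_i$ while every competing chain is unchanged. Under $a_i^{i,\tau}$ the $n_i$-chain had the maximum, and $n_j$'s chain now carries at least $W_i$ plus its own votes, so it is the maximum; ties can be resolved in its favour or handled by a weak-inequality tie-breaking convention. Hence $n_i$ stays on the winning path, now as an intermediary. The delivered competence is at least $\hat{c}_{j\to i,t_k}^{\tau-1} > c_{i,t_k}$ in $n_i$'s belief, because the chain passing through $n_j$ terminates at an agent that $n_j$ itself judged at least as competent. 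This monotonicity along chains needs a short induction using the same rationality assumption applied to downstream agents.

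\emph{Case (c): $n_i$ is an intermediary on the winning path.} Setting $v_i=n_i$ truncates the path and changes it to case (b) with $n_i$ executing at competence $c_{i,t_k}$. Case (b) then already shows delegation does at least as well.

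The main obstacle is the payoff comparison in case (b). I must show that the intermediary share $n_i$ earns on a higher-competence path is at least what it earns as executor at competence $c_{i,t_k}$. Marginal-contribution payoffs give this directly: $n_i$'s contribution as intermediary is the competence gain it enables, which in expectation is positive, while as executor its marginal contribution over the next-best alternative (routing through $n_j$) is nonpositive. I would formalize this by writing $u_i$ as the difference between the expected value of the realized outcome and the value of the best outcome without $n_i$'s vote, in the style of a VCG/IDM critical-diffusion payoff. Then $u_i(a_i^{j,\tau})\ge u_i(a_i^{i,\tau})$ reduces to comparing the winning path's delivered competence $\hat c_{j\to i,t_k}^{\tau-1}$ against $c_{i,t_k}$, which holds by hypothesis. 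If the later payoff definition does not take this form, I would fall back on monotonicity of payoff in delivered competence together with the observation that the executor's reward on a lower-competence path is dominated by the path reward $n_i$ is guaranteed as a critical intermediary.
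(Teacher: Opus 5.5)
There is a genuine gap at the step you flag as the main obstacle. You never use the rule that actually defines $u_i$, Eq.~\ref{eq:del_payoff}, and the properties you put in its place do not match that rule. Under Eq.~\ref{eq:del_payoff}, $n_i$'s payoff on the critical chain is $\mathbb{I}(\cdot)\bigl(f(c'_i)-f(c'_{i-1})\bigr)$. This depends on $n_i$'s \emph{own reported} competence and its predecessor's report, not on the competence delivered by the executing guru. Your VCG/IDM claim that the executor's marginal contribution is nonpositive is also false here: a truthfully reporting guru is paid $c+f(c_{i,t_k})-f(c'_{i-1})$, which is strictly positive whenever it beats its predecessor. In particular, if you hold $\mathbf{c}'_i$ fixed and $n_i$ stays on the path, the marginal term $f(c'_i)-f(c'_{i-1})$ is the same under $a_i^{j,\tau}$ and $a_i^{i,\tau}$, so routing into a more competent chain gains $n_i$ nothing by itself. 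Your fallback sentence, that the executor's reward is dominated by the intermediary's path reward, simply restates the inequality to be proved.

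The missing idea is that delegation changes which reports are \emph{admissible}. If $n_i$ retains its vote, no delegated neighbor backs a report above $c_{i,t_k}$. The mechanism therefore treats any such report as penalized misreporting, and $n_i$'s best penalty-free utility is $f(c_{i,t_k})-f(c'_{i-1})$. If $n_i$ delegates to $n_j$, any report up to $\hat{c}_{j\to i,t_k}^{\tau-1}$ counts as permitted strategic signaling, so $n_i$ can secure up to $f(\hat{c}_{j\to i,t_k}^{\tau-1})-f(c'_{i-1})$. Because $f$ is increasing, the paper finishes by cases on the report. Truthful reports give equal utility, over-reports are safe only after delegating, and under-reports are dominated.

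The paper takes invariance of $n_i$'s winning-path membership under $v_i$ as given, with only a brief vote-preservation remark. Your vote-mass argument in case (b) tries to justify this, but it fails when $n_j\in\mathcal{D}(n_i)$ under $a_i^{i,\tau}$. In that case $v_i=n_j$ closes a delegation cycle and the votes are lost, and the theorem quantifies over all $a_{-i}^\tau$, so this configuration is allowed.

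Three smaller points remain. First, the nonnegativity you invoke in case (a) is not automatic in this mechanism: the $\alpha$ indicator, $p^{\text{mis}}$ and $p^{\text{inf}}$ can all make payoffs negative, so you need rational upstream reports and an admissible report by $n_i$. Second, comparing guru with intermediary requires reading $c$ as reimbursement of execution cost; otherwise delegating forfeits $c$ and the inequality can fail.
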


We refer the reader to  \S~\ref{ic} for the complete proof. 


\subsection{Consensus based routing paths and critical nodes}
\label{sec:ledger}

All self-interested decisions $\{(\mathbf{c}'_i, v_i, r_i)\}_{n_i \in \mathcal{N}}$ from agents are aggregated to produce the routing path and compute agent payments, by a ledger that is a transparent, non-coercive aggregator and whose operation is common knowledge to all agents.

\paragraph{Consensus Based Routing paths.}
An agent that retains its vote is called a \textit{guru} $g \in \mathcal{G}_k$ for task $t_k$. The set $\mathcal{D}(g)$ denotes all agents whose votes transitively flow to $g$, with total vote count $V(g) = |\mathcal{D}(g)| + 1$, and this delegation path forms the routing path for a single task. To extend a routing path from task $t_k$ to task $t_{k+1}$, the ledger identifies the \textit{representative delegate}
$
    n_d^* = \arg\max_{n_d \in \mathcal{D}(g)} \sum_{j > k} c_d^{\prime\,t_j},
$
the member of $g$'s delegation pool with the greatest reported cumulative competence over downstream tasks. If $n_d^*$ has declared a connection to some $n_u \in \mathcal{T}_{t_{k+1}}$, the path is extended to the corresponding guru $g' \in \mathcal{G}_{k+1}$ of $n_u$, with $(n_d^*, n_u)$ serving as a bridge across task classes, and the path vote count updated as $V_{\text{path}} \leftarrow V_{\text{path}} + V(g')$. This procedure is applied iteratively across $Q$. A path is \textit{feasible} if it maintains a continuous chain of connected gurus across all tasks, and the ledger selects
$
    \mathcal{P}^* = \arg\max_{\mathcal{P} \in \mathbb{F}}\; V_{\text{path}}
$
over the feasible set $\mathbb{F}$.

\paragraph{Critical Agents and Payoff Computation.}
To compute payments reflecting marginal contribution, we identify agents whose delegation decisions are indispensable to $\mathcal{P}^*$. For any $n_i \in \mathcal{P}^*$, consider the counterfactual in which $n_i$ retains its vote, and let $\mathcal{P}_c^*(n_i)$ be the resulting feasible path. Agent $n_i$ is \textit{critical} if $\mathcal{P}_c^*(n_i)$ is the counterfactual winning path. The critical agents
$
    \mathcal{C} = \bigl\{\, n_i \in \mathcal{P}^* \mid \mathcal{P}_c^*(n_i) \text{ is the winning path}\bigr\}
$ is the set of all such agents $n_i$.

\begin{theorem}[\textbf{Critical Chain}]
\label{thm:critical}
For each task $t_k$, the critical set $\mathcal{C}_{t_k} = \{n_i \in \mathcal{C} \mid n_i \in \mathcal{T}_{t_k}\}$ forms a contiguous delegation chain in which each agent delegates its vote to the next agent in the sequence.
\end{theorem}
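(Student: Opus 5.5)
We read criticality as follows: $n_i$ is critical when its withdrawal (retaining its vote) changes the winning path. The plan is to show that, within a single task $t_k$, the critical agents all lie on one delegation chain ending at the guru $g_k$ of $\mathcal{P}^*$, and that this set is closed downward along that chain.

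\textbf{Step 1: delegation structure.} For a fixed task $t_k$, delegations are restricted to $E_{\text{intra}}$ and each agent has a single outgoing choice $v_i$. So the delegation graph on $\mathcal{T}_{t_k}$ is a forest of in-trees, each rooted at a guru; cycles are excluded in a rational system, or we assume the ledger breaks them. The segment of $\mathcal{P}^*$ for $t_k$ is the tree $\mathcal{D}(g_k)\cup\{g_k\}$ of the winning guru. If $n_i\in\mathcal{T}_{t_k}$ retains its vote, then $V(g_k)$ drops by exactly $|\mathcal{D}(n_i)|+1$, the size of the subtree rooted at $n_i$. It may also change the representative delegate $n_d^*$, and hence the bridge to $t_{k+1}$.

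\textbf{Step 2: monotonicity along ancestors.} Suppose $n_i$ is critical, and let $n_j$ be any agent on the chain from $n_i$ to $g_k$. Then $\mathcal{D}(n_j)\supseteq \mathcal{D}(n_i)\cup\{n_i\}$, so withdrawing $n_j$ removes at least as many votes from $V_{\text{path}}$. The competing feasible paths are unaffected, since they use other gurus. Hence whenever losing $n_i$'s subtree makes $\mathcal{P}^*$ lose the argmax, losing $n_j$'s larger subtree does too. Withdrawing $n_j$ also removes every candidate that withdrawing $n_i$ removes from the representative-delegate pool, so the bridge-feasibility effect is monotone as well. Therefore the critical set is closed upward: $n_i\in\mathcal{C}_{t_k}$ implies every ancestor of $n_i$ up to $g_k$ is in $\mathcal{C}_{t_k}$.

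\textbf{Step 3: a single branch, which is the main obstacle.} We must rule out two critical agents $n_a,n_b$ in disjoint subtrees of $\mathcal{D}(g_k)$. Here we argue via the winning margin $\Delta = V(\mathcal{P}^*)-\max_{\mathcal{P}\neq\mathcal{P}^*}V(\mathcal{P})$. An agent is vote-critical iff its subtree size exceeds $\Delta$, possibly with equality under tie-breaking. It is feasibility-critical iff it is an ancestor of, or equal to, the bridge delegate $n_d^*$.

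The feasibility-critical agents form exactly the ancestor chain of $n_d^*$, which is a single chain. For vote-criticality, I would argue as follows. If two disjoint subtrees each had size greater than $\Delta$, then the mechanism's path selection under rational delegation would be inconsistent. Concretely, the agents at the branch point would see that one branch is redundant. If that argument fails, I would fall back to defining the chain as the maximal-subtree path from $g_k$ and showing the critical set is a prefix of it.

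The genuinely hard point is precisely this branching case. I expect to need the tie-breaking rule, or an extra assumption implicit in ``rational system'', to exclude it. Once Steps 2 and 3 hold, $\mathcal{C}_{t_k}$ is an ancestor-closed subset of a single root-to-leaf path. It is therefore a contiguous chain in which each member delegates to the next, which proves the theorem.
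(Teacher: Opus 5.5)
There is a genuine gap, and it comes from your definition of criticality rather than from Step~3 itself. You call $n_i$ critical if retaining its vote makes $\mathcal{P}^*$ stop winning, i.e.\ pivotality. The paper's notion is stronger: $n_i$ is critical if, once it retains its vote, the path $\mathcal{P}_c^*(n_i)$ now headed by $n_i$ is itself the counterfactual winner. The appendix proof phrases this as ``the resulting winning path contains (and thus terminates at) $n_i$''.

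Under your reading the statement is false, so no tie-breaking rule or appeal to rationality can close Step~3. Take a single task and a winning guru $g$ whose two children $n_a,n_b$ each receive two further votes, so $V(g)=7$. Add a rival guru with $6$ votes. Competences decreasing toward the leaves make every delegation here incentive compatible. Withdrawing either $n_a$ or $n_b$ leaves $g$ with $4$ votes and hands the win to the rival. So both are pivotal, yet they lie in disjoint subtrees. Your fallback (the critical set as a prefix of the maximal-subtree path) fails on the same symmetric example. The ``feasibility-critical'' class you build from $n_d^*$ is not part of the paper's notion at all.

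With the paper's definition, the branching case becomes a short counting argument. This is what the paper means by ``both outcomes cannot simultaneously correspond to strictly maximal total weight''. Suppose $n_a,n_b$ lie in disjoint subtrees of $g$, with subtree vote counts $s_a,s_b$. After $n_a$ withdraws, $g$ keeps a residual tree containing itself and all of $n_b$'s subtree. So $n_a$ heading the new winner requires $s_a > V(g)-s_a > s_b$. Symmetrically $s_b > s_a$, a contradiction.

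Your Step~2 then survives, but with a different justification. Let $n_j$ be an ancestor of a critical $n_i$. The path headed by $n_j$ carries at least as many votes as the one headed by $n_i$, $g$'s residual carries fewer, and rival paths are unchanged. Hence $n_j$ is critical too, comparing vote counts and ignoring bridge feasibility as the paper does. This upward closure is what turns pairwise comparability into contiguity, a step the paper's proof asserts without spelling out.
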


\noindent We refer the reader to \S~\ref{proof-critical} for the proof. 

Let $\mathcal{C}_{t_k} = (n_1, n_2, \dots, n_{m-1}, n_m = g)$ be the ordered critical chain. The payoff to $n_i \in \mathcal{C}_{t_k}$ is:
\begin{equation}
\label{eq:del_payoff}
p^{(t_k)}(n_i) =
\begin{cases}
\mathbb{I}_{\{c'_{i+1} > c'_i\}}\bigl(f(c'_i) - f(c'_{i-1})\bigr), & n_i \in \mathcal{C}_{t_k} \setminus \{g\}, \\[6pt]
c + \mathbb{I}_{\{c'_{i+1} > c'_i\}}\bigl(f(c'_i) - f(c'_{i-1})\bigr), & n_i = g, \\[6pt]
0, & n_i \notin \mathcal{C}_{t_k},
\end{cases}
\end{equation}
where $f(\cdot)$ is a monotone scaling function, $c$ is a base execution cost, and $\mathbb{I}$ equals $-1$ if $c'_{i+1} > c'_i$ and $\alpha > 0$ otherwise, thereby penalizing irrational delegation to a less competent agent on task $t_k$. The aggregate property $\sum_{n_i \in \mathcal{C}_{t_k}} p^{(t_k)}(n_i) = c + f(c_g)$ pins the user's total cost for task $t_k$ to a function of the competence of the executing guru, which implies the user's cost to be proportional to competence delivered.
The \textit{infeasibility penalty}
$
    p^{\text{inf}}(n_i) = \alpha \sum_{t \neq t_k} c_i^{\prime\,t}
$
is levied when an agent declares an infeasible connection, so as to reliably extend delegation paths across tasks. The \textit{auxiliary misreporting penalty}
$
    p^{\text{mis}}(n_i) = \alpha \sum_{n_{i'} \in \mathcal{D}_g^{t_k}} \min\!\bigl(0,\; c_{i'}^{\prime\,\neg t_k} - c_i^{\prime\,\neg t_k}\bigr)
$
penalizes misreporting of competence to the ledger, where $c_{i}^{\prime\,\neg t_k}$ represents the competence reported by $n_i$ on tasks other than $t_k$ and $c_{i'}^{\prime\,\neg t_k}$ represents the competence achieved on tasks other than $t_k$. Importantly, we distinguish between permitted strategic signaling and penalized misreporting based on intrinsic capacity and prior information flow. A report $\hat{c}_i > c_i$ is considered strategic and non-penalized if a delegated neighbor $n_j$ possesses sufficient competence ($c_j \geq \hat{c}_i$) and has previously diffused a claim $\hat{c}_j \geq \hat{c}_i$. This allows agents to safely leverage the verified potential of their neighborhood. Conversely, reporting a competence that exceeds both intrinsic capacity and the previously diffused capabilities of all neighbors is classified as misreporting. Critically, this can be verified by the ledger through the competence reports while determining the payoff. Therefore, the total payoff is given by:
$
    p^{\text{total}}(n_i) = \sum_{t_k \in Q} p^{(t_k)}(n_i) \;-\; p^{\text{mis}}(n_i) \;-\; p^{\text{inf}}(n_i).
$

\subsection{Agent Observations and Diffusion Actions}
\label{sec:diffusion}

\paragraph{Payoff.}
Once the ledger resolves $\mathcal{P}^*$ and computes payments, each agent $n_i$ receives a private observation:
$    o_i^{\tau+1} = \Bigl(p_i^\tau,\;\; \bigl\{\hat{\mathbf{c}}_{j \to i}^\tau,\; m_j^i\bigr\}_{j \in r_i}\Bigr),
$
comprising two components. The first is a \textit{realized payoff} $p_i^\tau$ is feedback on the effectiveness of agent $n_i$'s strategic reports $(\mathbf{c}'_i, v_i, r_i)$ in the resolved allocation for the request: it signals whether $n_i$'s delegation decision was effective in securing a position in $\mathcal{P}^*$, updating the payoff history $\Pi_i^\tau$ in its state. The second is information received through diffusion actions by neighbors elaborated below. This peer information exchange is performed by each agent when in its self interest and potentially conditioned on the payoff observed. This information diffused influences future delegation outcomes.  

\paragraph{Diffusion Action.}
Each agent $n_j$ diffuses information conditioned on the payoff:
$
    h_j^\tau (\{\Pi_j^{<\tau}\} \bigr) = \bigl\{\,(j \to i,\; \hat{\mathbf{c}}_{j \to i}^\tau,\; m_j^i)\bigr\}_{i \in r_j},
$
where $\hat{\mathbf{c}}_{j \to i}^\tau$ is the competence signal agent $n_j$ transmits to neighbor $n_i$ and $m_j^i \in \mathcal{M}$ is an optional accompanying private message, typically conditioned on historical payoffs received. The \textit{incoming diffusion signals} to agent $n_i$ given by $\{\hat{\mathbf{c}}_{j \to i}^\tau, m_j^i\}_{j \in r_i}$ are the outputs of each neighbor $n_j$'s social action $h_j^\tau$, and constitute the \textit{sole channel} through which agent $n_i$ can form beliefs about the capabilities of its neighbors. Unlike the ledger report $\mathbf{c}'_i$, the diffused signal $\hat{\mathbf{c}}_{j \to i}^\tau$ carries no direct payment consequence. Its purpose is entirely prospective: by shaping what neighbor $n_i$ observes, agent $n_j$ influences $n_i$'s future delegation decision $v_i$ and therefore its own future vote accumulation and payoff. The central tension in the diffusion channel is between influence and exposure. An agent $n_j$ benefits from attracting delegation from neighbor $n_i$, but fully revealing its true competence $\mathbf{c}_j$ allows $n_i$ to form accurate beliefs that diminish $n_j$'s payment in future rounds given by Eq. \ref{eq:del_payoff}, elaborated further in \S~\ref{dynamics}. We characterize the Nash equilibrium across the delegation, diffusion and strategic reporting dimensions, wherein each agent receives payoff representative of their true marginal over the system.

\noindent

\begin{theorem}[Nash Equilibrium Reporting, Diffusion, and Delegation] 
At Nash equilibrium action profile $a = (a_i)_{n_i \in \mathcal{N}}$, the following conditions hold for every agent $n_i \in \mathcal{N}$:
\vspace{-5pt}
\begin{center}
\small
\begin{tabular}{ccc}
\textbf{Reporting:} & \textbf{Diffusion:} & \textbf{Delegation:} \\
$c'^i_{t_i} = \begin{cases} c^i_{t_i}, & v_i = n_i \\ \hat c_i, & v_i \neq n_i \end{cases}$ & 
$c_{t_i}^{j \to i} = \begin{cases} c'^i_{t_i}, & c'^j_{t_i} > c'^i_{t_i} \\ c'^j_{t_i}, & c'^j_{t_i} \le c'^i_{t_i} \end{cases}$ & 
$v_i = \begin{cases} i, & c_i \ge \tilde d_i \\ \text{arg}\max_{j \in r_i} c_{t_i}^{j \to i}, & c_i < \tilde d_i \end{cases}$
\end{tabular}
\end{center}
\vspace{-5pt}
\label{thm:ne}
\end{theorem}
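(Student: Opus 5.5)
The plan is to verify the three conditions one at a time as mutual best responses. In each case we fix $a_{-i}$ at the claimed profile and show that no unilateral deviation by $n_i$ in the relevant coordinate raises $p^{\text{total}}(n_i)$. The order will be delegation first, then reporting given delegation, then diffusion. The reason is that the payoff in Eq.~\ref{eq:del_payoff} depends on reports only through the critical chain of Theorem~\ref{thm:critical}, and that chain is determined by the delegation choices.

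\textbf{Delegation.} Define $\tilde d_i$ as the best competence signal $n_i$ receives from a neighbor, $\tilde d_i = \max_{j \in r_i} c^{j\to i}_{t_i}$. If $c_i < \tilde d_i$, Theorem~\ref{thm:ic} gives directly that retaining the vote is weakly dominated by delegating to that neighbor. Among the possible delegates, choosing the argmax maximises the term $f(c'_{i+1})$ from which $n_i$'s successor's marginal is measured. It also guarantees that the indicator takes the non-penalising value $\alpha$. If instead $c_i \ge \tilde d_i$, any delegation would be to a neighbor whose signal is at most $c_i$. That either triggers the $-1$ indicator, i.e.\ the irrational-delegation penalty, or forfeits the guru term $c$. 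So retaining the vote is a best response.

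\textbf{Reporting.} Given $v_i$, we consider two cases.
\begin{itemize}
\item If $v_i = n_i$, then $n_i$ is a guru. Overreporting $c'_i > c_i$ cannot be justified by any neighbor's diffused claim, so it is classified as misreporting and penalised by $p^{\text{mis}}$. Because $f$ is monotone, underreporting strictly lowers $f(c'_i) - f(c'_{i-1})$. Hence $c'_i = c_i$.
\item If $v_i \neq n_i$, the permitted strategic-signaling clause allows any report up to the delegate's diffused claim $\hat c_i$ without penalty. Monotonicity of $f$ then pushes the report up to exactly that threshold, which gives $c' = \hat c_i$.
\end{itemize}

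\textbf{Diffusion.} Here we use the influence--exposure tension. Suppose $c'^j > c'^i$. Then $n_j$ only needs $n_i$ to prefer delegating to it, and by the delegation rule that requires a signal of at least $c'^i$. Any larger signal raises the value $n_i$ may report, and hence $f(c'_{i})$. That in turn lowers $n_j$'s own marginal $f(c'_j) - f(c'_i)$ in future rounds, so the minimally sufficient signal $c'^i$ is optimal. Suppose instead $c'^j \le c'^i$. Then $n_j$ cannot attract the vote without its own report becoming unsupported. Signaling $c'^j$ truthfully is weakly optimal, since an inflated signal would induce a delegation in which $n_i$'s dependent report exceeds $c_j$, triggering $p^{\text{mis}}$ along the chain.

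Finally, the claim that payoffs equal true marginal contributions follows by substituting these conditions into Eq.~\ref{eq:del_payoff}. Each chain member's reported competence then equals the competence actually realised through its delegate. The sum therefore telescopes to $c + f(c_g)$, and each term equals the true increment.

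The main obstacle is the diffusion step. Its payoff consequences are only prospective, so the argument needs a stationarity assumption: the same neighbor decision rule is applied in future rounds. I would try first to treat the repeated interaction as a sequence of one-shot games under that stationarity assumption, and to argue round by round that the minimal signal is a best response.
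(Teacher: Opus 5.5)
\textbf{Main gap: the level of $\hat c_i$ is never derived.} In your Reporting step, for $v_i \neq n_i$, you show only that $n_i$ reports its delegate's diffused claim, $c'^i_{t_i} = c^{v_i \to i}_{t_i}$, and then call that value $\hat c_i$. In the statement, however, $\hat c_i$ is a specific function of intrinsic competences: the second-largest $c_v$ over $S(i,v_i)$. Nothing in your argument produces that quantity.

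Your Diffusion step does not supply it either. The rule $c^{j\to i}_{t_i} = c'^i_{t_i}$ combined with $c'^i_{t_i} = c^{j\to i}_{t_i}$ is satisfied by any common value, so the two conditions only refer to each other. What actually fixes the level is $n_i$'s outside option. The minimal signal with which $n_j$ wins $n_i$'s vote must beat the best alternative open to $n_i$: its own $c_i$, or what a rival neighbour can credibly offer. A rival's offer is in turn capped by the best competence reachable through it.

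This is the step the paper's proof rests on. The minimal sufficient diffusion is the runner-up competence $c^{(2)}_{t_i}$. Applying this inductively along the delegation chain, from the guru outward, gives $c'^i_{t_i} = \hat c_i$. So the reporting level is an \emph{output} of the diffusion analysis and cannot be settled before it. With your ordering you can show that reports equal received signals, but not the stated formula.

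\textbf{Second gap: you prove the wrong direction.} The theorem asserts that \emph{every} Nash equilibrium satisfies the three rules. You instead fix $a_{-i}$ at the claimed profile and check best responses one coordinate at a time. That shows at most that this profile is an equilibrium, and even that only partly, since joint deviations in report, delegation and diffusion together are not ruled out. It says nothing about other equilibria.

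Your delegation step also rests on the weak inequality of Theorem~\ref{thm:ic}, which cannot force $v_i = \arg\max_{j} c^{j\to i}_{t_i}$ at an arbitrary equilibrium. The paper's delegation argument instead starts from an arbitrary equilibrium and argues by contradiction that any other choice is dominated. The domination is strict when $c_i < \tilde d_i$, because $f$ is strictly increasing.

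Finally, your stated reason for choosing the argmax delegate is off. It is optimal because it maximises $n_i$'s own admissible report, and hence $f(c'_i) - f(c'_{i-1})$. The successor's term $f(c'_{i+1})$ enters $n_i$'s payoff only through the indicator.
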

\noindent where $\hat c_i
=
\max_{v \in S(i,v_i)\setminus
\arg\max_{u \in S(i,v_i)} c_u} c_v$, 
$\tilde d_i = \max_{j \in r_i} c_{t_i}^{j \to i}$ and
$S(i,j) = \{ n_k \in \mathcal{N} \;|\; n_j \in \operatorname{reach}(n_k) \;\text{and}\; n_k \in P,\; \forall P \in \mathcal{P}_{j \to i} \}$,
where $\operatorname{reach}(n_i) = \{ n_j \in \mathcal{N} \mid \exists \text{ a directed path from } n_i \text{ to } n_j \}$ denotes the set of agents reachable from $n_i$, and $\mathcal{P}_{j \to i}$ denotes the set of all directed paths from $n_j$ to $n_i$ in the diffusion graph. We refer the reader to \S~\ref{ne} for the complete proof.

We provide the the overall \name{AgentSociety} pseudo-code in \S~\ref{pseudo}.

\section{Mechanism-Induced Collaborative Dynamics}
\label{dynamics}


\paragraph{Interplay of Diffused Competence and Reported Competence} Agents optimize a neighbor-specific disclosure to signal competence while minimizing opportunity costs via private communication $h_j^\tau \in \mathcal{H}$. The strategy therefore is: agent $n_j$ diffuses the minimum signal sufficient to secure $n_i$'s delegation given $n_i$'s current beliefs about the competitive landscape. Formally, the equilibrium diffusion satisfies:
$
    \hat{c}_{j \to i}^\tau = c_i + \varepsilon,
$
strictly dominating $n_i$'s self-reported competence by the smallest margin $\varepsilon > 0$ that triggers delegation, rather than disclosing $c_j$. Agent $n_i$, observing a superior signal from $n_j$, delegates to $n_j$ given incentive compatibility and updates its ledger report accordingly to be $c_i + \epsilon$; while agent $n_j$ reports its intrinsic competence $c_j$ to the ledger and captures the payment differential described in Eq. \ref{eq:del_payoff}. 
Therefore, agent $n_i$'s beliefs about neighbor capabilities are strategically downward-biased unless required to be higher: the accumulated history $\hat{\mathbf{C}}_{j \to i}^{<\tau}$ in the state reflects not intrinsic competences but the sequence of influence bids that neighbors have found it strategically optimal to reveal. This selective signaling mechanism inherently induces: (i) \textbf{Dynamic Calibration}, where agents incrementally increase disclosure intensity only upon failure to attract delegation; (ii) \textbf{Persistent Interaction}, where stable signaling channels transform private informational advantages into long-term utility; and (iii) \textbf{Informed Delegation}, where private states provide the requisite local context for consensus-based routing. The net effect is the system discovers incentivized routing paths that provide higher performance to the user request and the intermediaries are rewarded by their ability to declare higher than their intrinsic capability and is not considered misreporting as this increased competence can be achieved by delegating its vote to the more competent neighbor, making it a win-win-win for the intermediary, more competent neighbor and the user.

\paragraph{Incentivized Transitive Delegation and Consensus Routing}
The mechanism moves from routing by competition to collaboration by inducing a state of transitive vote augmentation, where agents maximize utility by delegating to specialized peers to incrementally build a path's aggregate competence. Rather than engaging in a contest for singular task allocation, agents with overlapping expertise find augmentation to be the dominant strategy; by contributing to a high-performance task routing that naturally arises from vote delegation paths, agents secure a share of a larger, more certain payment pool that isolated bidding cannot provide. \name{AgentSociety} accomplishes this by proving incentive compatibility in delegation to more competent neighbors based on local context obtained through incentivized peer communication, along with a payoff design that rewards the highest consensus based winning path with agents within receiving payoffs that are fair and representative of their marginal contribution.

\section{Experimental Results}
\label{results}

We conduct a comprehensive empirical evaluation of \name{AgentSociety} to demonstrate its efficacy in facilitating decentralized, consensus-based routing for user requests at test-time. Our analysis centers on the emergence of incentivized global information flow driven by the utility-maximizing strategic behavior of individual agents, where we characterize the mechanism, benchmark LLM agents on social intelligence and demonstrate that our mechanism enables self-interested autonomous agents to provide improved performance on real-world datasets through consensus-based routing.

\textbf{Mechanism Characterization.} We characterize \name{AgentSociety} with heterogeneous agents (up to 30) under best-response dynamics (please see \S~\ref{intra_br} for proof) across diverse graph topologies. Fig. \ref{fig:strategic_analysis} (left) illustrates that local utility maximization facilitates competence discovery, wherein self-interested delegation drives the network toward the global performance frontier. Mechanistic analysis (for node 10 in \S~\ref{single_agent_config}) reveals that payoffs are non-monotonic with respect to information diffusion: under-diffusion fails to establish leverage and ability to receive a payoff, while excessive diffusion erodes the agent’s marginal value as showin in Fig. \ref{fig:strategic_analysis} (right). These results demonstrate that \name{AgentSociety} successfully aligns individual incentives with collective efficiency, transforming a decentralized set of self-interested actors into a robust discovery engine that optimizes system-wide utility. We provide further experimental setup details in \S~\ref{mech}. 

\begin{figure}[!ht]
    \centering
    \small
    \begin{minipage}{0.45\textwidth}
        \centering
        \includegraphics[width=\linewidth]{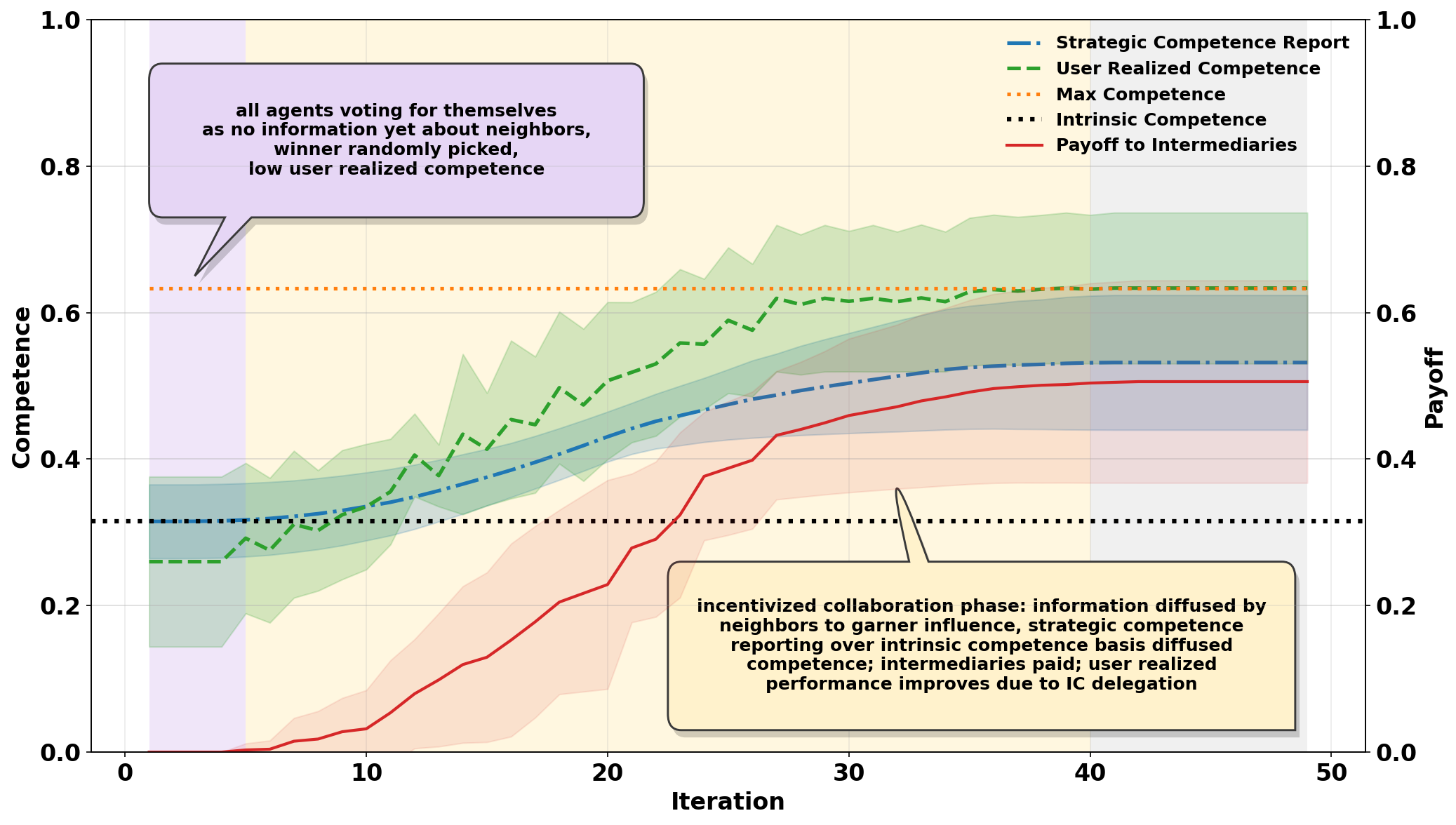}
    \end{minipage}
    \hfill
\begin{minipage}{0.5\textwidth}
    \centering
    \scriptsize 
    \setlength{\tabcolsep}{4pt}
    \renewcommand{\arraystretch}{1.2}
    \begin{tabular}{c c c c c}
        \toprule
        \makecell[b]{\textbf{Diffused} \\ \textbf{Competence}} & 
        \makecell[b]{\textbf{Votes} \\ \textbf{Garnered}} & 
        \makecell[b]{\textbf{In Feasible} \\ \textbf{Path}} & 
        \makecell[b]{\textbf{In Winning} \\ \textbf{Path}} & 
        \makecell[b]{\textbf{Payoff} \\ \textbf{Received}} \\
        \midrule
        <0.5 & 0 & $\times$ & $\times$ & 0 \\
        0.5    & 1 & $\times$ & $\times$ & 0 \\
        0.6    & 2 & \checkmark & $\times$ & 0 \\
        \textbf{0.7}    & 4 & \textbf{\checkmark} & \textbf{\checkmark} & \textbf{0.2} \\
        0.8    & 9 & \checkmark & \checkmark & 0.1 \\
        \bottomrule
    \end{tabular}
\end{minipage}
    
    \vspace{2mm} 
    \caption{\textbf{Characterization of \name{AgentSociety}.} (Left) Averaged over multiple graph configurations and competences, we demonstrate the mechanism driving user realized competence (green) higher through incentivized diffusion by intermediaries, reflected by their increasing payments (red). (Right) Impact of information diffusion intensity on routing outcomes for a node to demonstrate that strategic disclosure is necessary to transition from infeasibility to path selection and reward realization.}
    \label{fig:strategic_analysis}
\end{figure}


\textbf{LLM Agent Social Intelligence.} We utilize \name{AgentSociety} to empirically characterize the strategic behavior of LLM-based agents in terms of their capacity for selective information disclosure and calibration of competence reports in response to neighbor signals. This framing positions \name{AgentSociety} as a dual-purpose analytical tool: a diagnostic benchmark for individual LLM alignment with strategic optima, and a principled framework for evaluating the collaboration and aggregate performance of heterogeneous agents on real-world tasks discussed below. On the former, LLM agents are presented with the system prompt elaborating on the working of \name{AgentSociety} and a user prompt to obtain the LLM agent's response to delegation and diffusion (please see \S~\ref{app:prompts} for prompts). The LLM agent observes payoffs given by Eq. \ref{eq:del_payoff} with $\alpha$ set to 100. We quantify LLM behavioral divergence from the best-response equilibrium across two granularities in the presence of other best response agents: decision-level divergence, representing the cumulative difference in actions given a fixed state, and trajectory-level divergence, representing the cumulative shift in the evolved state itself. These deviations are measured over multiple configurations through the percentage overlap in delegation choices and the mean absolute error (MAE) of diffused information and reported competence in Fig. \ref{fig:combined_social_analysis} (left). We observe that Llama2.5-7bI, Qwen2.5-7bI, and GPT-4o-mini fail to obtain payoffs; GPT-5-mini, Gemini2.5-Flash, and Gemini2.5-Pro exhibit collaborative diffusion but over-disclose in enhanced configurations. Llama3.1-7b shows irrational self-delegation, while Qwen2.5-7b and GPT-4o-mini delegate irrationally occasionally; all other models maintain rationality. We provide a granular view for configuration \ref{conf1} in Fig. 
\ref{fig:combined_social_analysis} (right) with results for more configurations in \S~\ref{single_agent_config}. We provide further experimental setup details in \S~\ref{si}. 

\begin{figure}[!ht]
    \centering
    \small
    \begin{minipage}{0.58\textwidth}
        \centering
        \setlength{\tabcolsep}{4pt}
        \renewcommand{\arraystretch}{1.1}
        \begin{tabular}{l c c c}
            \toprule
            \textbf{Model} & \textbf{\makecell{Strat. Rep. \\ (MAE) $\downarrow$}} & \textbf{\makecell{Collab. Diff. \\ (MAE) $\downarrow$}} & \textbf{\makecell{I.C. Del. \\ Match \% $\uparrow$}} \\
            \midrule
            gpt-4o-mini      & 0.0668 & 0.0712 & 45.67\% \\
            llama-3.1-8b     & 0.0689 & 0.0674 & 46.67\% \\
            qwen-2.5-7b      & 0.0427 & 0.0574 & 47.00\% \\
            gemini-2.5-flash & 0.0368 & 0.0402 & 65.33\% \\
            gemini-2.5-pro   & \textbf{0.0304} & \textbf{0.0373} & 72.33\% \\
            gpt-5-mini       & 0.0325 & 0.0406 & \textbf{81.00\%} \\
            \midrule
            best-response    & 0.0000 & 0.0000 & 100.0\% \\
            \bottomrule
        \end{tabular}
    \end{minipage}
    \hfill
    \begin{minipage}{0.38\textwidth}
        \centering
        \includegraphics[width=\linewidth]{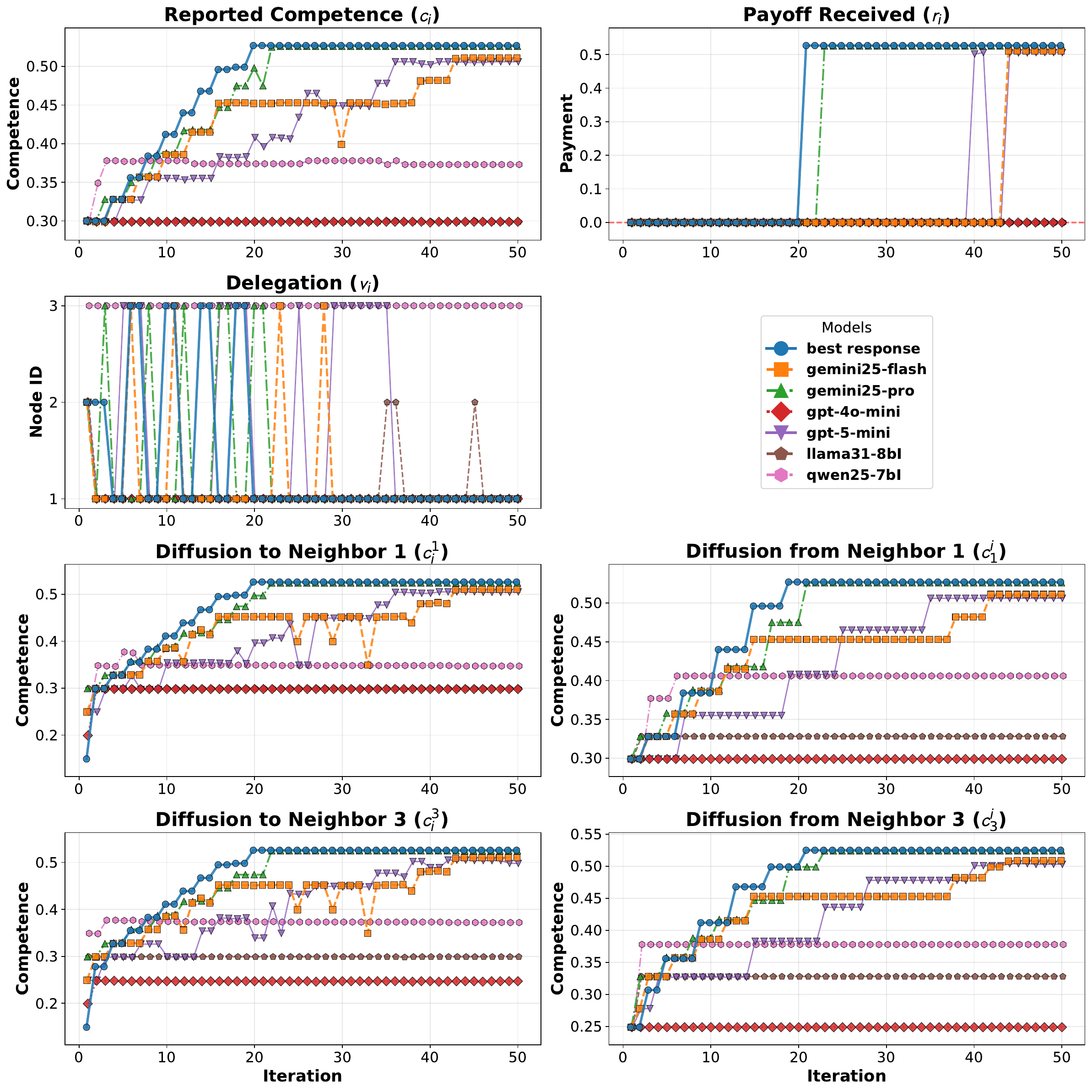}
    \end{minipage}
    \caption{\textbf{LLM Social Intelligence and Agent Dynamics.} Left: Quantitative deviation metrics for strategic reporting, collaborative diffusion, and incentive compatible delegation. Right: LLM agent dynamics against best-response for configuration 1, that provides a fine grained view into each agent's evolving actions (reported competence, delegation, diffusion to neighbor) and observations (payoff received, diffusion from)}
    \label{fig:combined_social_analysis}
\end{figure}

\textbf{Heterogeneous Collaborative MAS on Real-World Datasets.} We evaluate the collaborative performance of self-interested agents instantiated within \name{AgentSociety} on three benchmark suites: Open Leaderboard v2 (15 domains) \cite{openlb2}, MMLU-Pro (14 domains) \cite{mmlupro2}, and SWE-bench (5 domains) \cite{swe2}. Agent competence is parameterized as an $n$-dimensional vector, estimated on a 30\% training split and utilized during evaluation on the remaining 70\%. This parameterization serves as a generic proxy for competence making \name{AgentSociety} agnostic to the method of competence estimation. Table  \ref{tab:merged_results}(a) shows that \name{AgentSociety} yields consistent gains over the Best Single (BS) baseline, which approximates the performance frontier in non-incentivized settings. These improvements are driven by the novel interplay of incentivized information diffusion and delegation, allowing self-interested agents with specialized agent strengths to collaborate without a centralized authority.

\name{AgentSociety} scales seamlessly, both in number of agents with similar and varied capabilities. Table \ref{tab:merged_results}(b) shows that performance gains over BS scale with task complexity; as the number of domains or steps per request increases, the performance delta grows from +3.3\% to +7.1\%. This confirms that the mechanism's decentralized routing becomes increasingly critical as task complexity outpaces the capabilities of individual generalist models. Intermediary agents, driven by self-interest and information diffusion, ensure the routing of tasks to the most competent specialized models. On SWE-bench  in Table \ref{tab:merged_results}(a), we further demonstrate that the mechanism excels in identifying and incentivizing collaboration among heterogeneous agents with complementary strengths, particularly when using complementary base models. \name{AgentSociety} shines in incentivizing collaboration among heterogeneous agents with complementary strengths identified through social action and in each agent's self-interest. Oracle performance is obtained when each task is routed to the best domain model. The performance gap relative to the oracle is primarily attributable to two aspects, first being noisy, incomplete competence estimates from the training set, leading to competence inversions where training performance does not generalize to the test set. As shown in Table \ref{tab:merged_results}(c), performance monotonically approaches the oracle as estimate noise decreases. The second arises due to absence of feasible paths to reach the best agent for each task, for instance when a competent agent for a task lies within a clique and does not factor in the feasible paths $\mathbb{F}$. We provide further experimental details in \S~\ref{cps} and \S~\ref{cpm}.

\begin{table}[!ht]
\centering
\caption{\textbf{Real world dataset performance.} Part (a) shows \name{AgentSociety} benchmarks. Part (b) highlights scalability to multi-task and high-node (30 nodes) configurations. Part (c) illustrates performance under noisy estimates via a training ratio sweep.}
\label{tab:merged_results}
\small
\setlength{\tabcolsep}{4pt}
\renewcommand{\arraystretch}{1.05}
\begin{tabular}{@{}l c c c c c c@{}}
\toprule
            \textbf{Dataset/ Configuration} & \textbf{Agents} &  \textbf{Domains} &  \textbf{AS} & \textbf{BS} & \textbf{$\Delta$ (pp)} & \textbf{Oracle}\\ \midrule
                        MMLU-Pro & 6 & 14 & 0.7754 & 0.7568 & \textbf{+1.86} & 0.7821 \\

            Open LeaderBoard v2 & 6 & 15 & 0.6547 & 0.5714 & \textbf{+8.33} & 0.6565\\
            SWE-bench (Strong/ Generalist) & 6 & 5& 0.7341 & 0.7024 & \textbf{+3.2} & 0.7381 \\
            SWE-bench (Weaker/ Complementary) & 6 & 5&  0.5635 & 0.5079 & \textbf{+5.6} & 0.5873 \\ \midrule
            \textit{(Multi-task requests)} & & & & & &\\
            Two-task request & 30 & 5 & 0.6901 & 0.657 & \textbf{+3.31} & 0.708 \\
            Three-task request & 30 & 5 & 0.6963 & 0.642 & \textbf{+5.43} & 0.714 \\
            Four-task request & 30 & 5 & 0.7003 & 0.629 & \textbf{+7.1} & 0.718 
            \\ \midrule
            \textit{Noisy competence estimates (Open LeaderBoard v2)} & & & & & & \\
            Train ratio: 10\% & 6 & 15 & 0.6172 & 0.5714 & \textbf{+4.58} & 0.6565 \\
            Train ratio: 20\% & 6 & 15 &0.6393 & 0.5714 & \textbf{+6.79} & 0.6565  \\
            Train ratio: 25\% & 6 & 15 &0.6510 & 0.5714 & \textbf{+7.96} & 0.6565  \\
            Train ratio: 30\% & 6 & 15 &0.6547 & 0.5714 & \textbf{+8.33} & 0.6565 \\ \bottomrule
\end{tabular}
\end{table}

\paragraph{Limitations} Our mechanism relies on LLM agents being able to estimate their competence on a task, which currently remains an active domain of research \cite{casey}. However, as  our mechanism operates on competence as a generalized vector, we believe it is independent of and positioned to leverage these advances. We were unable to leverage frontier models for our evaluation on real world datasets due to their near-identical ranking across current datasets. In order to showcase improved collaboration through collaboration, we had to leverage weaker models that demonstrated complementary strengths.


\section{Conclusion}
We introduce a \name{AgentSociety}, a mechanism design grounded in social choice theory for LaMAS that enables agents to make autonomous decisions to maximize their utility while achieving collective objectives. We showed consensus based routing for agentic settings arises from delegation actions of agents. We proved that delegation to a more compatible neighbor is incentive compatible. Additionally, we showed the effectiveness of our mechanism in evaluation strategic behavior of agents critical for economic autonomy and frictionless interaction with current economy. Finally, we benchmark state-of-the-art LLMs with respect to best response. \name{AgentSociety} paves the path towards agentic economic autonomy and verifiable rewards using mechanism design. 

\newpage

\bibliography{references}

@inproceedings{
gptswarm,
title={{GPTS}warm: Language Agents as Optimizable Graphs},
author={Mingchen Zhuge and Wenyi Wang and Louis Kirsch and Francesco Faccio and Dmitrii Khizbullin and J{\"u}rgen Schmidhuber},
booktitle={Forty-first International Conference on Machine Learning},
year={2024},
url={https://openreview.net/forum?id=uTC9AFXIhg}
}

@inproceedings{
gdesigner,
title={G-Designer: Architecting Multi-agent Communication Topologies via Graph Neural Networks},
author={Guibin Zhang and Yanwei Yue and Xiangguo Sun and Guancheng Wan and Miao Yu and Junfeng Fang and Kun Wang and Tianlong Chen and Dawei Cheng},
booktitle={ICLR 2025 Workshop on Foundation Models in the Wild},
year={2025},
url={https://openreview.net/forum?id=Jov79pGXc6}
}

@inproceedings{
supernet,
title={Multi-agent Architecture Search via Agentic Supernet},
author={Guibin Zhang and Luyang Niu and Junfeng Fang and Kun Wang and LEI BAI and Xiang Wang},
booktitle={Forty-second International Conference on Machine Learning},
year={2025},
url={https://openreview.net/forum?id=imcyVlzpXh}
}

@misc{agentnet,
      title={AgentNet: Decentralized Evolutionary Coordination for LLM-based Multi-Agent Systems}, 
      author={Yingxuan Yang and Huacan Chai and Shuai Shao and Yuanyi Song and Siyuan Qi and Renting Rui and Weinan Zhang},
      year={2025},
      eprint={2504.00587},
      archivePrefix={arXiv},
      primaryClass={cs.MA},
      url={https://arxiv.org/abs/2504.00587}, 
}

@article{social-action,
title = {Modelling social action for AI agents},
journal = {Artificial Intelligence},
volume = {103},
number = {1},
pages = {157-182},
year = {1998},
note = {Artificial Intelligence 40 years later},
issn = {0004-3702},
doi = {https://doi.org/10.1016/S0004-3702(98)00056-3},
url = {https://www.sciencedirect.com/science/article/pii/S0004370298000563},
author = {Cristiano Castelfranchi},
}

@inproceedings{
aflow,
title={{AF}low: Automating Agentic Workflow Generation},
author={Jiayi Zhang and Jinyu Xiang and Zhaoyang Yu and Fengwei Teng and Xiong-Hui Chen and Jiaqi Chen and Mingchen Zhuge and Xin Cheng and Sirui Hong and Jinlin Wang and Bingnan Zheng and Bang Liu and Yuyu Luo and Chenglin Wu},
booktitle={The Thirteenth International Conference on Learning Representations},
year={2025},
url={https://openreview.net/forum?id=z5uVAKwmjf}
}

@misc{vae,
      title={Virtual Agent Economies}, 
      author={Nenad Tomasev and Matija Franklin and Joel Z. Leibo and Julian Jacobs and William A. Cunningham and Iason Gabriel and Simon Osindero},
      year={2025},
      eprint={2509.10147},
      archivePrefix={arXiv},
      primaryClass={cs.AI},
      url={https://arxiv.org/abs/2509.10147}, 
}

@misc{aex,
      title={Agent Exchange: Shaping the Future of AI Agent Economics}, 
      author={Yingxuan Yang and Ying Wen and Jun Wang and Weinan Zhang},
      year={2025},
      eprint={2507.03904},
      archivePrefix={arXiv},
      primaryClass={cs.AI},
      url={https://arxiv.org/abs/2507.03904}, 
}

@misc{mas-gpt,
      title={MAS-GPT: Training LLMs to Build LLM-based Multi-Agent Systems}, 
      author={Rui Ye and Shuo Tang and Rui Ge and Yaxin Du and Zhenfei Yin and Siheng Chen and Jing Shao},
      year={2025},
      eprint={2503.03686},
      archivePrefix={arXiv},
      primaryClass={cs.CL},
      url={https://arxiv.org/abs/2503.03686}, 
}

@misc{fermat,
      title={Discovering Coordinated Joint Options via Inter-Agent Relative Dynamics}, 
      author={Raul D. Steleac and Mohan Sridharan and David Abel},
      year={2025},
      eprint={2512.24827},
      archivePrefix={arXiv},
      primaryClass={cs.LG},
      url={https://arxiv.org/abs/2512.24827}, 
}

@inproceedings{
decpomdp1,
title={From Debate to Equilibrium: Belief\nobreakdash-Driven Multi\nobreakdash-Agent {LLM} Reasoning via Bayesian Nash Equilibrium},
author={Xie Yi and Zhanke Zhou and Chentao Cao and Qiyu Niu and Tongliang Liu and Bo Han},
booktitle={Forty-second International Conference on Machine Learning},
year={2025},
url={https://openreview.net/forum?id=RQwexjUCxm}
}

@misc{decpomdp2,
      title={LLM Collaboration With Multi-Agent Reinforcement Learning}, 
      author={Shuo Liu and Tianle Chen and Zeyu Liang and Xueguang Lyu and Christopher Amato},
      year={2025},
      eprint={2508.04652},
      archivePrefix={arXiv},
      primaryClass={cs.AI},
      url={https://arxiv.org/abs/2508.04652}, 
}

@inproceedings{
cutthecrap,
title={Cut the Crap: An Economical Communication Pipeline for {LLM}-based Multi-Agent Systems},
author={Guibin Zhang and Yanwei Yue and Zhixun Li and Sukwon Yun and Guancheng Wan and Kun Wang and Dawei Cheng and Jeffrey Xu Yu and Tianlong Chen},
booktitle={The Thirteenth International Conference on Learning Representations},
year={2025},
url={https://openreview.net/forum?id=LkzuPorQ5L}
}

@inproceedings{viscous2,
  title     = {Optimizing Viscous Democracy},
  author    = {Armstrong, Ben and Alouf-Heffetz, Shiri and Talmon, Nimrod},
  booktitle = {Proceedings of the Thirty-Third International Joint Conference on
               Artificial Intelligence, {IJCAI-24}},
  publisher = {International Joint Conferences on Artificial Intelligence Organization},
  editor    = {Kate Larson},
  pages     = {2643--2650},
  year      = {2024},
  month     = {8},
  note      = {Main Track},
  doi       = {10.24963/ijcai.2024/292},
  url       = {https://doi.org/10.24963/ijcai.2024/292},
}

@article{liqdem2,
author = {G\"{O}lz, Paul and Kahng, Anson and Mackenzie, Simon and Procaccia, Ariel D.},
title = {The Fluid Mechanics of Liquid Democracy},
year = {2021},
issue_date = {December 2021},
publisher = {Association for Computing Machinery},
address = {New York, NY, USA},
volume = {9},
number = {4},
issn = {2167-8375},
url = {https://doi.org/10.1145/3485012},
doi = {10.1145/3485012},
journal = {ACM Trans. Econ. Comput.},
month = oct,
articleno = {23},
numpages = {39}
}

@article{boldi,
author = {Boldi, Paolo and Bonchi, Francesco and Castillo, Carlos and Vigna, Sebastiano},
title = {Viscous democracy for social networks},
year = {2011},
issue_date = {June 2011},
publisher = {Association for Computing Machinery},
address = {New York, NY, USA},
volume = {54},
number = {6},
issn = {0001-0782},
url = {https://doi.org/10.1145/1953122.1953154},
doi = {10.1145/1953122.1953154},
journal = {Commun. ACM},
month = jun,
pages = {129–137},
numpages = {9}
}

@article{liqdem1, title={Liquid Democracy: An Algorithmic Perspective}, volume={32}, url={https://ojs.aaai.org/index.php/AAAI/article/view/11468}, DOI={10.1609/aaai.v32i1.11468}, number={1}, journal={Proceedings of the AAAI Conference on Artificial Intelligence}, author={Kahng, Anson and Mackenzie, Simon and Procaccia, Ariel}, year={2018}, month={Apr.} }

@inproceedings{mdsn,
  title={Mechanism design in social networks},
  author={Li, Bin and Hao, Dong and Zhao, Dengji and Zhou, Tao},
  booktitle={Proceedings of the AAAI Conference on Artificial Intelligence},
  volume={31},
  number={1},
  year={2017}
}

@misc{idm,
      title={Mechanism Design in Social Networks}, 
      author={Bin Li and Dong Hao and Dengji Zhao and Tao Zhou},
      year={2017},
      eprint={1702.03627},
      archivePrefix={arXiv},
      primaryClass={cs.GT},
      url={https://arxiv.org/abs/1702.03627}, 
}

@inproceedings{
pokemon,
title={Pok\'eChamp: an Expert-level Minimax Language Agent},
author={Seth Karten and Andy Luu Nguyen and Chi Jin},
booktitle={Forty-second International Conference on Machine Learning},
year={2025},
url={https://openreview.net/forum?id=SnZ7SKykHh}
}

@inproceedings{
board-games,
title={Mastering Board Games by External and Internal Planning with Language Models},
author={John Schultz and Jakub Adamek and Matej Jusup and Marc Lanctot and Michael Kaisers and Sarah Perrin and Daniel Hennes and Jeremy Shar and Cannada A. Lewis and Anian Ruoss and Tom Zahavy and Petar Veli{\v{c}}kovi{\'c} and Laurel Prince and Satinder Singh and Eric Malmi and Nenad Tomasev},
booktitle={Forty-second International Conference on Machine Learning},
year={2025},
url={https://openreview.net/forum?id=KKwBo3u3IW}
}

@inproceedings{
duan2024gtbench,
title={{GTB}ench: Uncovering the Strategic Reasoning Capabilities of {LLM}s via Game-Theoretic Evaluations},
author={Jinhao Duan and Renming Zhang and James Diffenderfer and Bhavya Kailkhura and Lichao Sun and Elias Stengel-Eskin and Mohit Bansal and Tianlong Chen and Kaidi Xu},
booktitle={The Thirty-eighth Annual Conference on Neural Information Processing Systems},
year={2024},
url={https://openreview.net/forum?id=ypggxVWIv2}
}

@inproceedings{huang2025competing,
  title={Competing large language models in multi-agent gaming environments},
  author={Huang, Jen-tse and Li, Eric John and Lam, Man Ho and Liang, Tian and Wang, Wenxuan and Yuan, Youliang and Jiao, Wenxiang and Wang, Xing and Tu, Zhaopeng and Lyu, Michael},
  booktitle={The Thirteenth International Conference on Learning Representations},
  year={2025}
}

@inproceedings{
govsim,
title={Cooperate or Collapse:  Emergence of Sustainable Cooperation in a Society of {LLM} Agents},
author={Giorgio Piatti and Zhijing Jin and Max Kleiman-Weiner and Bernhard Sch{\"o}lkopf and Mrinmaya Sachan and Rada Mihalcea},
booktitle={The Thirty-eighth Annual Conference on Neural Information Processing Systems},
year={2024},
url={https://openreview.net/forum?id=0zWzJj6lO3}
}

@inproceedings{
casey,
title={Do Large Language Models Know What They Are Capable Of?},
author={Casey O. Barkan and Sidney Black and Oliver Sourbut},
booktitle={The Fourteenth International Conference on Learning Representations},
year={2026},
url={https://openreview.net/forum?id=EO6WtJ0q6G}
}

@misc{mmlupro2,
      title={MMLU-Pro: A More Robust and Challenging Multi-Task Language Understanding Benchmark}, 
      author={Yubo Wang and Xueguang Ma and Ge Zhang and Yuansheng Ni and Abhranil Chandra and Shiguang Guo and Weiming Ren and Aaran Arulraj and Xuan He and Ziyan Jiang and Tianle Li and Max Ku and Kai Wang and Alex Zhuang and Rongqi Fan and Xiang Yue and Wenhu Chen},
      year={2024},
      eprint={2406.01574},
      archivePrefix={arXiv},
      primaryClass={cs.CL},
      url={https://arxiv.org/abs/2406.01574}, 
}

@misc{swe2,
      title={SWE-bench: Can Language Models Resolve Real-World GitHub Issues?}, 
      author={Carlos E. Jimenez and John Yang and Alexander Wettig and Shunyu Yao and Kexin Pei and Ofir Press and Karthik Narasimhan},
      year={2024},
      eprint={2310.06770},
      archivePrefix={arXiv},
      primaryClass={cs.CL},
      url={https://arxiv.org/abs/2310.06770}, 
}

@misc{openlb2,
  author = {Clémentine Fourrier and Nathan Habib and Alina Lozovskaya and Konrad Szafer and Thomas Wolf},
  title = {Open LLM Leaderboard v2},
  year = {2024},
  publisher = {Hugging Face},
  howpublished = {\url{https://huggingface.co/spaces/open-llm-leaderboard/open_llm_leaderboard}}
}

@misc{mass,
      title={Social Theory Should Be a Structural Prior for Agentic AI: A Formal Framework for Multi-Agent Social Systems}, 
      author={Lynnette Hui Xian Ng and Iain J. Cruickshank and Adrian Xuan Wei Lim and Kathleen M. Carley},
      year={2026},
      eprint={2605.07069},
      archivePrefix={arXiv},
      primaryClass={cs.MA},
      url={https://arxiv.org/abs/2605.07069}, 
}


\newpage

\appendix

\section{Proofs of Main Results}
We present the proofs of our theorems and lemmas here.

\subsection{Proof for Incentive Compatibility in Delegation}
\label{ic}
\begin{proof}[Proof of Theorem~\ref{thm:ic}]
All agents act rationally. Hence, at each time step \( \tau \), the payment received by any
agent \( n_i \in \mathcal{N} \) is determined entirely by the first component of the payment
rule, since any deviation that triggers an infinite penalty is strictly dominated.
Consequently, agents have no incentive to misreport in ways that violate feasibility.

Fix a task \( t_i \in \mathcal{T} \), and consider an agent \( n_i \) that lies on the critical
path for this task.
Conditioned on the reported competence vector
\(
(c'_{i,1},\dots,c'_{i,k}),
\)
agent \( n_i \) remains on the winning path regardless of its delegation decision
\( v_i^\tau \in r_i^\tau \cup \{n_i\} \). For agents not on the critical path, the payment remains zero irrespective of their
delegation choice. The realized payment to agent \( n_i \) depends only on the set of agents assigned
to task \( t_i \) and their reported competences.

Formally, for any two delegation choices
\( v_i^\tau\) of the remaining agents, we have
\[
\mathbb{I}\!\left\{ n_i \in W(v_i^\tau, a_{-i}^\tau) \right\}
=
\mathbb{I}\!\left\{ n_i \in W(\tilde v_i^\tau, a_{-i}^\tau) \right\},
\]
where \( W(\cdot) \) denotes the winning path induced by the corresponding joint action.
Thus, delegation preserves the voting weight accumulated by agent \( n_i \), since votes from
upstream agents remain unchanged and agent \( n_i \) retains its competence contribution on
all tasks.

We now analyze the incentive properties of agent \( n_i \) under different reporting behaviors.

\paragraph{Case 1: Truthful reporting.}
Suppose agent \( n_i \) reports its true competence on task \( t_i \), i.e.,
\( c'_{i,t_i} = c_{i,t_i} \).
Then, for any feasible delegation decision
\( v_i^\tau\),
the utility of agent \( i \) satisfies
$
u_i(a_i^\tau, a_{-i}^\tau)
=
f(c_{i,t_i}) - f(c_{i-1,t_i}),
$
which is independent of the choice of \( v_i^\tau \).
Hence, truthful reporting yields identical utility across all feasible delegation actions.

\paragraph{Case 2: Over-reporting.}
Suppose agent \( i \) inflates its reported competence from its true value \( c_{i,t_i} \) to
some \( c'_{i,t_i} > c_{i,t_i} \), based on its private social information
\( h_i^\tau \).
Such over-reporting is feasible only if there exists a neighbor
\( n_j \in r_i^\tau \) such that
$
c_{j \to i, t_k}^{\tau-1}
>
c_{i,t_i}.
$ If agent \( n_i \) chooses self-delegation, \( v_i^\tau = n_i \), then no such informational
advantage can be exploited, and thus \( c'_{i,t_i} = c_{i,t_i} \).
In this case, the utility of agent \( n_i \) is
$
u_i^{\mathrm{self}}
=
f(c_{i,t_i}) - f(c_{i-1,t_i}).
$ Now suppose agent \( n_i \) delegates to a neighbor
\( j \in r_i^\tau \) such that
$
c_{j \to i, t_k}^{\tau-1}
>
c_{i,t_i}.
$ Then agent \( n_i \) may inflate its reported competence up to
$
c'_{i,t_i}
\le
c_{j \to i,,t_i}^{\,\tau-1}.
$
In this case, the highest reported competence on task \( t_i \) becomes
\(c_{j \to i,,t_i}^{\,\tau-1} \), yielding utility
$
u_i^{\mathrm{del}}
=
f(c'_{i,t_i}) - f(c_{i-1,t_i}).
$

If
$
\max_{j \in r_i^\tau \cup \{i\}}
c_{j \to i, t_k}^{\tau-1}
=
c_{i,t_i},
$
then no such neighbor exists and \( c'_{i,t_i} = c_{i,t_i} \), implying
\(
u_i^{\mathrm{self}} = u_i^{\mathrm{del}}.
\)
Otherwise, if there exists a neighbor \( n_j \) satisfying
\(
c_{j \to i, t_k}^{\tau-1},
\)
then delegating to \( n_j \) allows agent \( n_i \) to increase its reported competence, yielding
$
u_i^{\mathrm{del}}
\;\ge\;
u_i^{\mathrm{self}}.
$
Therefore, delegation to a neighbor with higher diffused competence weakly dominates
self-voting under over-reporting.

\paragraph{Case 3: Under-reporting.}
Suppose agent \( n_i \) reports \( c'_{i,t_i} < c_{i,t_i} \).
Then its utility becomes
$
u_i^{\mathrm{self}}
=
f(c'_{i,t_i}) - f(c_{i-1,t_i}),
$
which is strictly smaller than the truthful utility
$
f(c_{i,t_i}) - f(c_{i-1,t_i}),
$
since \( f(\cdot) \) is strictly increasing.
Therefore, under-reporting strictly reduces utility and is never optimal for a rational
agent.

Combining all three cases, we conclude that delegating to a neighbor \( n_j \) such that
\(
c_{j \to i, t_k}^{\tau-1} > c_{i,t_i},
\) constitutes a weakly dominant strategy relative to retaining
the vote. This completes the proof.
\end{proof}

\newpage
\subsection{Proof for Continuous Critical Path}
\label{proof-critical}
\begin{proof}[Proof of Theorem~\ref{thm:critical}]
Fix a task $t_k \in \mathcal{T}$. For each agent $n_i$, let $v_i$ denote its delegated vote, and let $\mathrm{guru}(i)$ be the terminal agent reached by following delegation links from $n_i$.

Let $C_{t_k}$ be the set of agents critical for $t_k$, where $n_i$ is critical if, when $n_i$ deviates by voting for itself, the resulting winning path contains (and thus terminates at) $n_i$.

First, let $n_a,n_b \in C_{t_k}$. Suppose for contradiction that
\[
\mathrm{guru}(a) \neq \mathrm{guru}(b).
\]
If $n_a$ votes for itself, criticality implies the winning path terminates at $n_a$. Similarly, if $n_b$ votes for itself, it must terminate at $n_b$. Since all other agents’ actions remain unchanged in each deviation, both outcomes cannot simultaneously correspond to strictly maximal total weight. This contradicts the assumption that both $n_a$ and $n_b$ are critical. Hence,
\[
\mathrm{guru}(a) = \mathrm{guru}(b).
\]

It remains to prove that $C_{t_k}$ is contiguous along the winning delegation path. Suppose $n_a,n_b \in C_{t_k}$ share the same guru, but neither lies on the delegation path of the other. Consider the deviation where $n_a$ votes for itself. By criticality, the winning path must terminate at $n_a$, while $n_b$’s contribution remains unchanged. The symmetric argument holds when $n_b$ votes for itself.

Thus, the two deviations produce winning paths ending at different agents with identical contributions from all non-deviators, so both cannot yield strictly maximal weight. This contradiction implies that if two agents in $C_{t_k}$ share the same guru, one must lie on the delegation path of the other. Therefore, $C_{t_k}$ forms a continuous segment of the winning delegation path.
\end{proof}

\subsection{Proof of Nash Equilibrium Theorem \ref{thm:ne}}
\label{ne}

\subsubsection{Nash Equilibrium Delegation Rule}

Add what the equilibrium rule is here

Fix a Nash equilibrium action profile $(a_i)_{n_i \in \mathcal{N}}$ and let $n_i \in \mathcal{N}$
be an arbitrary agent. After the joint action $a = (a_1, \ldots, a_N)$ is taken and the state
transitions to $s'$, agent $n_i$ receives a private observation
\[
    o_i \;\in\; \mathcal{O}_i,
\]
comprising its realized payment together with private social information
$\{c_{j \to i,\,t_k}^{\tau-1}\}$, formally:
\[
    o_i \;=\; \Bigl\{\,\bigl(j,\;(c_{j \to i,1},\ldots,c_{j \to i,k}),\;h_j^i\bigr)
    \;\Big|\; j \in r_i \cup \{i\} \Bigr\}.
\]
In particular, for each neighbor $n_j \in r_i$, agent $n_i$ observes the diffused competence
$c_{j \to i,\,t_i}$ on its assigned task $t_i$, and observes its own true competence
$c_{t_i}^{i \to i} = c_i$. Define
\[
    \tilde{c}_i \;=\; \max_{j \in r_i}\; c_{j \to i,\,t_i}
\]
as the highest diffused competence value observed by $n_i$ across its neighborhood $r_i$.
We establish the equilibrium delegation rule by exhaustive case analysis.

\medskip
\noindent\textbf{Case 1: $c_i \geq \tilde{c}_i$.}

Suppose, toward contradiction, that at equilibrium agent $n_i$ delegates to some neighbor,
i.e.\ $v_i = n_j$ for some $j \in r_i$. By definition of $\tilde{c}_i$, the diffused
competence of any such neighbor satisfies
\[
    c_{j \to i,\,t_i} \;\leq\; \tilde{c}_i \;\leq\; c_i.
\]
Hence the guru induced by the delegation path $v_i = n_j$ has reported competence on task
$t_i$ that is weakly less than $c_i$. By the payment rule, agent $n_i$'s reported competence
weakly exceeds that of its guru, so its marginal contribution is non-positive and its payment
is weakly negative.

In contrast, if agent $n_i$ self-delegates ($v_i = n_i$), the reporting rule yields
\[
    c_{t_i}^{i \to i} \;=\; c_i,
\]
so $n_i$ incurs no negative payment. Since self-delegation weakly dominates delegating to any
neighbor, the unique best response is
\[
    v_i \;=\; n_i.
\]

\medskip
\noindent\textbf{Case 2: $c_i < \tilde{c}_i$.}

Let
\[
    n_{j^*} \;\in\; \arg\max_{j \in r_i}\; c_{j \to i,\,t_i}
\]
denote a neighbor achieving the maximal diffused competence observed by $n_i$. We rule out
the two suboptimal alternatives.

\smallskip
\noindent\textit{Subcase 2a: Delegating to a non-maximizing neighbor.}
If $v_i = n_j$ for some $j \neq j^*$, the utility received by $n_i$ is
\[
    f\!\left(c_{j \to i,\,t_i}\right) - f\!\left(c'_{i-1,\,t_i}\right)
    \;<\;
    f\!\left(c_{j^* \to i,\,t_i}\right) - f\!\left(c'_{i-1,\,t_i}\right),
\]
since $c_{j \to i,\,t_i} < c_{j^* \to i,\,t_i}$ and $f$ is strictly increasing. Hence
delegating to any $n_j$ with $j \neq j^*$ is strictly dominated by delegating to $n_{j^*}$.

\smallskip
\noindent\textit{Subcase 2b: Self-delegation.}
If $v_i = n_i$, the reporting rule yields $c_{t_i}^{i \to i} = c_i$. Since
$c_i < \tilde{c}_i = c_{j^* \to i,\,t_i}$, self-delegation produces strictly lower utility
than delegating to $n_{j^*}$.

\smallskip
Combining both subcases, the unique best response is
\[
    v_i \;=\; j^* \;\in\; \arg\max_{j \in r_i}\; c_{j \to i,\,t_i}.
\]

\medskip
\noindent\textbf{Conclusion.}

Since Cases~1 and~2 are exhaustive and mutually exclusive, at any Nash equilibrium action
profile $(a_i)_{n_i \in \mathcal{N}}$ every agent $n_i \in \mathcal{N}$ adopts the delegation
rule
\[
    v_i \;=\;
    \begin{cases}
        n_i,
            & \text{if } c_i \geq \tilde{c}_i, \\[6pt]
        \displaystyle\arg\max_{n_j \in r_i}\; c_{j \to i,\,t_i},
            & \text{if } c_i < \tilde{c}_i.
    \end{cases}
\]
\hfill$\blacksquare$

\paragraph{Proof.}
Fix a Nash equilibrium action profile \( (a_{i})_{n_i \in \mathcal{N}} \), and fix agent
\( n_i \in \mathcal{N} \).

\medskip
\noindent\textbf{\textit{Case 1: Diffusion to the critical agents}}

If agent \( n_i \)'s private social action \( h_i \) fails to diffuse a competence value high
enough to the critical agent, then that agent does not choose \( v_j = n_i \).
Consequently, agent \( n_i \) does not lie on the winning path and receives zero payment:
\[
u_i(a_i, a_{-i}) = 0.
\]

Thus, any equilibrium diffusion must transmit sufficient competence information to the
critical agent to secure delegation. The minimum such diffusion is $c_{t_i}^{(2)}$. Applying this inductively across nodes we get the equilibrium reporting strategy as \[
c_{t_i}^{\,h} =
\begin{cases}
c_{t_i}^{\,i}, & v_i = n_i \\[6pt]
\hat{c_i}, & v_i \neq n_i
\end{cases}
\]
where $\hat c_i
=
\max_{v \in S(i,v_i)\setminus
\arg\max_{u \in S(i,v_i)} c_u} c_v$, 
$\tilde d_i = \max_{j \in r_i} c_{t_i}^{j \to i}$ and
$S(i,j) = \{ n_k \in \mathcal{N} \;|\; n_j \in \operatorname{reach}(n_k) \;\text{and}\; n_k \in P,\; \forall P \in \mathcal{P}_{j \to i} \}$,
where $\operatorname{reach}(n_i) = \{ n_j \in \mathcal{N} \mid \exists \text{ a directed path from } n_i \text{ to } n_j \}$ denotes the set of agents reachable from $n_i$, and $\mathcal{P}_{j \to i}$ denotes the set of all directed paths from $n_j$ to $n_i$ in the diffusion graph. Intuitively, this means that the most competent reachable agent $n_j$ to agent $n_i$ diffuses just enough information, which is equivalent to most competent reachable node not in the path of $n_i, n_j$ to get the vote of $n_i$.

\medskip
\noindent\textbf{\textit{Case 2: Diffusion to non-critical agents}}

Altering diffusion to agents not on the critical path does not change:
the identity of the guru, the reported values \( c_{t_i}^{i \to i} \) and
\( c_{t_i}^{(2)} \), or the winning path. Hence,
\[
u_i(a_i, a_{-i}) = f(c_{t_i}^{i \to i}) - f(c_{t_i}^{(2)})
\]
remains unchanged.

\medskip
\noindent\textbf{\textit{Step 3: Optimality of the diffusion rule.}}

Any best response diffusion must:
\begin{itemize}
  \item ensure delegation by the critical agent;
  \item not reduce the agent’s own payment.
\end{itemize}

The diffusion rule
\[
c_{i \to j,t_j} =
\begin{cases}
c_{i \to i,t_i}, & \text{if } c_{j \to i,t_j} > c_{i \to i,t_j}, \\[4pt]
c_{j \to i,t_j}, & \text{if } c_{j \to i,t_j} \le c_{i \to i,t_j}.
\end{cases}
\]
satisfies both conditions. It guarantees delegation from all agents with higher competence
while preserving the payment term \( f(c_{t_i}^{i \to i}) - f(c_{t_i}^{(2)}) \). Hence, this diffusion strategy is optimal at Nash equilibrium

As in the equilibrium described if agent \( n_i \) delegates its vote to \( g_i \), then every agent reachable from \( n_i \)
whose task coincides with that of \( n_i \) also delegates to \( g_i \).
Consequently, all such agents lie on the same feasible path, if any at equilibrium.
This ensures that agent \( n_i \) can report the highest achievable competence on all auxiliary
tasks, since all reachable agents contributing to those tasks are contained within a single
path. Any deviation from this delegation structure strictly reduces the set of agents on agent
\( n_i \)’s feasible path.
As a result, the maximal reported competence attainable on auxiliary tasks weakly decreases.
Therefore, no such deviation can improve agent \( n_i \)’s outcome, and the preceding analysis
remains unaffected.

\hfill \( \square \)

\subsection{Best Response for Single-Task Request}
\label{intra_br}
\begin{lemma}[Best Response for Single-Task Request]
\label{lem:best_response_diffusion}
\textbf{Diffusion and Delegation.}
Let agent \( n_i \in \mathcal{N} \) have primary task \( t_i \), neighbors \( r_i \), and true
competence \( c_{i,t_i} \) on task \( t_i \).
At round \( \tau-1 \), agent \( n_i \) observes from its private observation, for each
\( n_j \in r_i \), the diffused competence value
\[
c_{j \to i,t_i}^{\,\tau-1},
\text{ with } c_{i \to i,t_i}^{\,\tau-1} := c_{i,t_i}.
\]

We define
\[
c^*
=
\max_{n_j \in r_i \cup \{n_i\}} c_{j \to i,t_i}^{\,\tau-1}, 
n_{j^*}
\in
\arg\max_{n_j \in r_i \cup \{n_i\}} c_{j \to i,t_i}^{\,\tau-1}.
\]

Then there exists a best response of agent \( n_i \) at round \( \tau \) such that:

\medskip
\noindent\textbf{(Diffusion)}
The private social action \( h_i^\tau \in \mathcal{H}_i \) satisfies, for each
\( n_j \in r_i \),
\[
c_{i \to j,,t_i}^{\,\tau}
=
\begin{cases}
\varnothing, & n_j = n_{j^*}, \\[4pt]
\min\!\left(c_{j \to i,t_i}^{\,\tau-1} + \delta,\; c^* \right),
& n_j \in r_i \setminus \{j^*\} \text{ and } c_{j \to i,t_i}^{\,\tau-1} < c^*, \\[6pt]
\varnothing, & \text{otherwise},
\end{cases}
\]
where \( \delta > 0 \) is a fixed diffusion increment, and
\( c_{i \to j,t_i}^{\,\tau} \) denotes the competence component on task \( t_i \)
transmitted by agent \( n_i \) to agent \( n_j \) through \( h_i^\tau \).

\medskip
\noindent\textbf{(Delegation)}
The delegation component of agent \( n_i \)'s action satisfies
\[
v_i^\tau \in \arg\max_{v \in r_i \cup \{i\}} c_{v \to i,,t_i}^{\,\tau-1}.
\]

In particular, delegating to an agent attaining the maximal observed diffused competence
value (or to itself, if it attains the maximum), together with the diffusion rule above,
constitutes a best response for agent \( n_i \) at round \( \tau \).
\end{lemma}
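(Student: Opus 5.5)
The argument splits along the two components of the action, delegation and diffusion, which enter agent $n_i$'s utility through different channels. Delegation affects the current-round payment for request $\tau$. Diffusion affects only neighbors' beliefs and hence future delegation into $n_i$. Because $n_i$ chooses both components simultaneously, I would fix $a_{-i}^\tau$ and show that each component, taken separately, attains a maximum of the corresponding term of $u_i$. Their combination is then a best response.

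\textbf{Delegation.} I would invoke Theorem~\ref{thm:ic} directly. If $c^* > c_{i,t_i}$, delegating to any neighbor with $c_{j\to i,t_i}^{\tau-1} > c_{i,t_i}$ weakly dominates self-voting. To see that among such neighbors the maximizer $n_{j^*}$ is optimal, I would reuse the computation of Subcase~2a in the proof of Theorem~\ref{thm:ne}. There, delegating to $n_j$ lets $n_i$ report at most $c_{j\to i,t_i}^{\tau-1}$ without triggering $p^{\mathrm{mis}}$. The resulting payment $f(c'_{i,t_i}) - f(c'_{i-1,t_i})$ is then increasing in this value, since $f$ is monotone. If instead $c^* = c_{i,t_i}$, every neighbor's diffused competence is at most $c_{i,t_i}$. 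By Case~1 of that same proof, delegating yields a non-positive marginal term (or the penalty $\mathbb{I}=-1$ applies), while self-voting yields a non-negative payment. So $v_i^\tau = n_i \in \arg\max$. Ties are harmless because the statement asserts only existence.

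\textbf{Diffusion.} The diffused signal carries no current-round payment, so I would analyse its effect on the next round's delegation by each neighbor $n_j$, assuming $n_j$ follows the delegation rule just established. There are three cases.
\begin{enumerate}
\item If $n_j = n_{j^*}$, then $n_i$ is itself delegating to $n_j$. Attracting $n_j$'s vote would create a cycle, or would displace the more competent guru from which $n_i$ benefits, so sending nothing is optimal.
\item If $c_{j\to i,t_i}^{\tau-1} \ge c^*$, then $n_i$ cannot credibly exceed $n_j$'s signal without reporting above $c^*$. Such a report is the misreporting that the ledger penalizes, so sending nothing is again weakly optimal.
\item If $c_{j\to i,t_i}^{\tau-1} < c^*$, $n_i$ wants to attract $n_j$. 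The reason is that $n_j$'s vote would flow transitively through $n_i$ to the more competent guru, which keeps $n_i$ on the critical chain (Theorem~\ref{thm:critical}). I would argue that $n_i$ should signal the minimal value exceeding $n_j$'s current belief, which is the minimal-sufficient-disclosure principle of \S~\ref{dynamics}. Over-disclosing would raise $n_j$'s ledger report, and through the $f(c'_{i-1})$ term in Eq.~\ref{eq:del_payoff} that would shrink $n_i$'s own marginal payment. The cap at $c^*$ keeps the signal within what $n_i$ can actually deliver through its delegation, so it is non-penalized strategic signaling rather than misreporting.
\end{enumerate}

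\textbf{Main obstacle.} The hardest step is justifying the specific form $\min(c_{j\to i}+\delta, c^*)$ as optimal rather than merely sensible. The minimal sufficient signal depends on $n_j$'s private $c_j$, which $n_i$ does not observe. To handle this, I would try to show the rule is optimal within the class of diffusion rules measurable with respect to $n_i$'s observation $o_i$. The $\delta$-increment then becomes the dynamic-calibration scheme: the signal is raised only after a failure to attract delegation, as reflected in the payoff history $\Pi_i$. I expect to need an assumption that neighbors' signals $c_{j\to i}$ proxy their own competence, which is what the equilibrium diffusion of Theorem~\ref{thm:ne} provides.
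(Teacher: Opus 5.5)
Your proposal is sound at the paper's level of rigor and has the same architecture as the paper's proof. For delegation, both use monotonicity of $f$ plus the cap of $n_i$'s report at its delegate's diffused value. For diffusion, both analyze neighbor by neighbor. Several steps, however, are justified differently.

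For delegation, the paper does not invoke Theorem~\ref{thm:ic} or the proof of Theorem~\ref{thm:ne}. It splits on whether $n_i$ is critical under $v_i^*$. If not, no alternative yields more than $0$; if so, the payment $f(c_{v\to i,t_i}) - f(\hat c_i^{(2)})$ is largest at the argmax. Your split on $c^*>c_{i,t_i}$ versus $c^*=c_{i,t_i}$ is the same computation. Both versions rest on the same unproved assertion that the choice of delegate cannot move $n_i$ onto or off the winning path; it is explicit in the paper's Case~1, and you inherit it from the proof of Theorem~\ref{thm:ic}.

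For $n_{j^*}$, the paper argues non-responsiveness rather than cycle avoidance: nothing $n_i$ can diffuse exceeds $c^*$, so $n_{j^*}$ never switches. For a strictly lower neighbor, the paper splits on whether $n_j$ is critical. If not, diffusion is payoff-irrelevant; if so, failing to attract $n_j$ yields zero. It then only notes that diffusing above $c^*$ is weakly dominated.

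Since the lemma claims only that \emph{some} best response has this form, weak optimality suffices. The paper gets it by simply treating $c_{j\to i,t_i}^{\tau-1}$ as $n_j$'s delegation threshold, which is exactly the proxy assumption you anticipate. So your measurability framework and the payoff-history reading of $\delta$ are unnecessary; the lemma's rule does not condition on $\Pi_i$ at all.

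Your route does add two things. First, it gives a reason to prefer the minimal signal \emph{strictly}: the $f(c'_{i-1})$ erosion, which the paper's proof never uses. Second, it treats ties correctly. The paper dismisses $c_{j\to i,t_i}^{\tau-1}=c^*$ with $j\neq j^*$ as impossible, although equality can occur when the argmax is not unique.
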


\paragraph{Delegation Best Response}

Fix any agent \( n_i \in \mathcal{N} \), and let \( r_i \) denote its neighbors.
From its private observation, agent \( n_i \) observes, for each \( n_j \in r_i \), the diffused
competence value \( c_{j \to i,t_i} \) on its own task \( t_i \), and it observes its own
true competence \( c_{i \to i,t_i} = c_i \). We define
$
\hat c_i
=
\max\!\Big( c_i,\; \max_{j \in r_i} c_{j \to i,t_i} \Big),
$
the highest competence value available to agent \( n_i \). Let agent \( n_i \) choose its delegation component
$
v_i^*
\in
\arg\max_{v \in r_i \cup \{i\}} c_{t_i}^{v \to i}.
$ We show that this delegation rule is optimal: for any alternative delegation
\( v_i' \neq v_i^* \), i.e.
$
u_i\!\big(a_i(v_i^*), a_{-i}\big)
\;\ge\;
u_i\!\big(a_i(v_i'), a_{-i}\big),
$
where \( a_i(v) \) denotes agent \( n_i \)'s action with delegation component \( v \)
and all other components fixed.

\medskip
Agent \( n_i \) receives a strictly positive payment if and only if it lies on the
critical path of the winning delegation tree.

\medskip
\noindent\textbf{\textit{Case 1: Agent \( n_i \) is not critical under \( v_i^* \).}}

Any alternative delegation \( v_i' \neq v_i^* \) cannot move agent \( n_i \) onto the winning
path, and hence
\[
u_i\!\big(a_i(v_i'), a_{-i}\big)
\le
0
=
u_i\!\big(a_i(v_i^*), a_{-i}\big).
\]

\medskip
\noindent\textbf{\textit{Case 2: Agent \( n_i \) is critical under \( v_i^* \).}}

Then agent \( n_i \) lies on the winning path, and its realized payment equals its marginal
contribution:
\[
u_i\!\big(a_i(v_i^*), a_{-i}\big)
=
f(\hat c_i) - f(\hat c_i^{(2)}),
\]
where \( \hat c_i^{(2)} \) denotes the competence value of the agent immediately preceding
\( n_i \) on the critical path, if any. Consider any alternative delegation \( v_i' \neq v_i^* \).
By definition of \( v_i^* \),
$
c_{v_i' \to i,t_i} \le c_{v_i^* \to i,t_i} = \hat c_i.
$ Since agent \( n_i \) remains critical under \( v_i' \), its payment is
\[
u_i\!\big(a_i(v_i'), a_{-i}\big)
=
f(c_{v_i' \to i,t_i}) - f(\hat c_i^{(2)}).
\]

Because \( f \) is monotone non-decreasing,
$
f(c_{v_i' \to i,t_i}) \le f(\hat c_i).
$
Hence,
\[
u_i\!\big(a_i(v_i'), a_{-i}\big)
\le
f(\hat c_i) - f(\hat c_i^{(2)})
=
u_i\!\big(a_i(v_i^*), a_{-i}\big).
\]

Therefore, any deviation from \( v_i^* \) weakly decreases agent \( n_i \)'s payment, with
strict inequality whenever
\(
c_{v_i' \to i,t_i} < c_{v_i^* \to i,t_i}.
\)

Thus, delegating to
$
v_i^* \in \arg\max_{v \in r_i \cup \{i\}} c_{v \to i,t_i}
$ is optimal for agent \( n_i \).
\hfill $\square$

\medskip
\paragraph{Diffusion Best Response}
From its private observation at round \( \tau-1 \), agent \( n_i \) observes
\[
\{\, c_{j \to i,t_i}^{\,\tau-1} \mid n_j \in r_i \,\},
\text{ with } c_{i \to i,t_i}^{\,\tau-1} := c_{i,t_i}.
\]

We define
\[
c^*
=
\max_{j \in r_i \cup \{i\}} c_{j \to i,t_i}^{\,\tau-1},
n_{j^*}
\in
\arg\max_{j \in r_i \cup \{i\}} c_{j \to i,t_i}^{\,\tau-1}.
\]

Agent \( n_i \) chooses its private social action \( h_i^\tau \in \mathcal{H}_i \),
which determines the diffused values
\( c_{t_i}^{i \to j,\,\tau} \) for each \( n_j \in r_i \). We analyze agent \( n_i \)'s diffusion incentives neighbor by neighbor.

\medskip
\noindent\textbf{\textit{Case 1: No diffusion to \( n_{j^*} \).}}

By definition of \( n_{j^*} \),
\[
c_{j^* \to i,t_i}^{\,\tau-1} = c^*.
\]
Hence agent \( n_{j^*} \) already observes a competence value weakly higher than any value agent
\( n_i \) can diffuse.
Therefore, under any feasible diffusion choice, agent \( n_{j^*} \) does not update its
delegation decision in response to \( h_i^\tau \).
Thus, in any best response,
\[
c_{i \to j^*,t_i}^{\,\tau} = \varnothing .
\]

\medskip
\noindent\textbf{\textit{{Case 2: Diffusion to non-maximal neighbors.}}}

We fix any \( n_j \in r_i \setminus \{n_{j^*}\} \),

\medskip
\noindent
\emph{Case 2.1: \( c_{j \to i,t_i}^{\,\tau-1} \ge c^* \).}

This case is impossible by definition of \( c^* \).

\medskip
\noindent
\emph{Case 2.2: \( c_{j \to i,t_i}^{\,\tau-1} < c^* \).}

If agent \( j \) is non-critical, changing \( c_{i \to j,t_i}^{\,\tau} \) does not affect agent \( n_i \)'s payment. If agent \( n_j \) is critical, failing to induce delegation from \( n_j \) implies that agent
\( n_i \) does not lie on the winning path, yielding zero payment. Hence, whenever \( n_j \) is critical, agent \( n_i \) must ensure
$
c_{i \to j,t_i}^{\,\tau} \ge c_{j \to i,t_i}^{\,\tau-1}
$ to secure \( n_j \)'s delegation. However, diffusing a value strictly larger than \( c^* \) cannot increase agent \( n_i \)'s
payment, since \( c^* \) is the maximal competence value available in \( n_i \)'s
neighborhood.
Therefore, any diffusion above \( c^* \) is weakly dominated. Thus, an optimal diffusion choice satisfies
\[
c_{i \to j,t_i}^{\,\tau}
=
\min\!\left(c_{j \to i,t_i}^{\,\tau-1} + \delta,\; c^* \right),
\]
where \( \delta > 0 \) is the minimal diffusion increment.

\medskip
Therefore, the diffusion policy
\[
c_{t_i}^{i \to j,\,\tau}
=
\begin{cases}
\varnothing, & n_j = n_{j^*}, \\[4pt]
\min\!\left(c_{j \to i,t_i}^{\,\tau-1} + \delta,\; c^* \right),
& n_j \in r_i \setminus \{n_{j^*}\} \text{ and } c_{j \to i,t_i}^{\,\tau-1} < c^*, \\[6pt]
\varnothing, & \text{otherwise},
\end{cases}
\]
constitutes a best response for agent \( n_i \) at round \( \tau \).
\hfill $\square$

\newpage
\section{\name{AgentSociety} Pseudocode}
\label{pseudo}

\begin{algorithm}[!ht]
\caption{AgentSociety: Mechanism for consensus based request routing among self-interested heterogeneous agents}
\label{alg:agentsociety_loop}
\SetKwInOut{Input}{Input}\SetKwInOut{Output}{Output}
\Input{Multi-task query $Q$, intrinsic competence $C_i$, initial diffusion state $\mathcal{S}_i^{\text{diff}} = \{ \hat{C}_{k \to i} \}_{k \in \mathcal{N}(i)}$}
\Output{Optimal path $\mathcal{P}^*$, Updated diffusion state $\mathcal{S}_i^{\text{diff}'}$}

\BlankLine
\tcp{Phase 1: Delegation and strategic competence reporting}
\ForEach{agent $i \in \mathcal{T}_{t_k}$ \textbf{in parallel}}{
    Observe current diffusion state $\mathcal{S}_i^{\text{diff}}$ (signals received from neighbors)\;
    Submit strategic action $a_i = (\mathbf{c}'_i, v_i, r_i)$ to the ledger where: \\
    $\mathbf{c}'_i = \max(C_i, \max_{k} \hat{C}_{k \to i})$ \tcp*{Signal derived from internal competence and received diffusion}
    $v_i = \arg\max_{k \in \mathcal{N}(i)} \hat{C}_{k \to i}$ \tcp*{Delegate to the most competent signaling neighbor}
}

\tcp{Phase 2: Consensus based request routing path}
\ForEach{task $t_j \in Q$}{
    Identify gurus $\mathcal{G}_j = \{g \in \mathcal{T}_{t_j} \mid v_g = g\}$\;
    Compute transitive vote counts $V(g) = \sum_{i: v_i \twoheadrightarrow g} 1$ via delegation graph traversal\;
}

Initialize feasible paths $\mathbb{F} = \emptyset$\;
\ForEach{$g \in \mathcal{G}_1$}{
    Select bridge $d^* = \arg\max_{d \in \mathcal{D}(g)} \sum_{j > 1} c_d^{t_j}$\;
    \If{bridge $d^*$ connects to guru $g' \in \mathcal{G}_{k+1}$}{
        $\mathbb{F} \leftarrow \mathbb{F} \cup \{\mathcal{P}_{g \to g'}\}$\;
    }
}
$\mathcal{P}^* = \arg\max_{\mathcal{P} \in \mathbb{F}} \sum_{g \in \mathcal{P}} V(g)$\;

\tcp{Phase 3: Payoff Calculation}
\ForEach{agent $v \in \mathcal{P}^*$}{
    \IF{$v$ is critical \COMMENT{via counterfactual audit}}{
        $P_v = p^{(t_j)}(v) - p^{\text{inf}}(v) - p^{\text{mis}}(v)$\;
    }
}

\tcp{Phase 4: Social Diffusion and State Update}
\ForEach{agent $i \in \mathcal{N}$ \textbf{in parallel}}{
    Observe realized reward $P_i$ and transition state $s_{t+1}$\;
    \eIf{$P_i < \text{Threshold}$ (Sub-optimal Outcome)}{
        \STATE \COMMENT{Modify diffusion action to garner future influence}
        $\hat{C}_{i \to j} \leftarrow C_i + \epsilon$ for $j \in \mathcal{N}(i)$\;
    }{
        $\hat{C}_{i \to j} \leftarrow C_i$ \tcp*{Maintain current signaling level}
    }
    Update Diffusion State $\mathcal{S}_i^{\text{diff}'} = \{ \hat{C}_{k \to i} \}_{k \in \mathcal{N}(i)}$ based on new signals from neighbors\;
    Update history $\Pi_i \leftarrow \Pi_i \cup \{P_i\}$\;
}
\end{algorithm}

\newpage

\section{Configurations and additional experimental results}

\subsection{Configuration for diffusion payment analysis Fig. \ref{fig:strategic_analysis} (right). }
\label{single_agent_config}

\begin{lstlisting}[
    language=Python,
    basicstyle=\ttfamily\footnotesize, % Smaller font for wide code
    frame=single,                      % Draws the box
    rulecolor=\color{black},           % Box border color
    breaklines=true,                   % Wraps long lines
    breakatwhitespace=true,            % Only wraps at spaces
    numbers=none,                      % Removes line numbers to save width
    xleftmargin=5pt,                   % Adds internal padding
    xrightmargin=5pt,
    showstringspaces=false             % Removes the visual 'u' symbols in strings
]
CUSTOM_GRAPH_CONFIG = {
    'nodes': [
        {'id': 1, 'primary_task': 1, 'intrinsic_competence': {1: 0.7, 2: 0, 3: 0}},
        {'id': 2, 'primary_task': 1, 'intrinsic_competence': {1: 0.7, 2: 0, 3: 0}},
        {'id': 3, 'primary_task': 1, 'intrinsic_competence': {1: 0.6, 2: 0, 3: 0}},
        {'id': 4, 'primary_task': 1, 'intrinsic_competence': {1: 0, 2: 0.4, 3: 0}},
        {'id': 5, 'primary_task': 1, 'intrinsic_competence': {1: 0, 2: 0.5, 3: 0}},
        {'id': 6, 'primary_task': 1, 'intrinsic_competence': {1: 0, 2: 0.4, 3: 0}},
        {'id': 7, 'primary_task': 1, 'intrinsic_competence': {1: 0, 2: 0.6, 3: 0}},
        {'id': 8, 'primary_task': 1, 'intrinsic_competence': {1: 0, 2: 0.6, 3: 0}},
        {'id': 9, 'primary_task': 1, 'intrinsic_competence': {1: 0, 2: 0.5, 3: 0}},
        {'id': 10, 'primary_task': 2, 'intrinsic_competence': {1: 0, 2: 0.9, 3: 0}},
        {'id': 11, 'primary_task': 2, 'intrinsic_competence': {1: 0, 2: 0.7, 3: 0}},
        {'id': 12, 'primary_task': 2, 'intrinsic_competence': {1: 0, 2: 0.6, 3: 0}},
        {'id': 13, 'primary_task': 2, 'intrinsic_competence': {1: 0, 2: 0.6, 3: 0}},
        {'id': 14, 'primary_task': 3, 'intrinsic_competence': {1: 0, 2: 0, 3: 0.5}}
    ],
    'edges': [
        (1, 4), (2, 6), (3, 7), (4, 8), (5, 6), (6, 11), (8, 9), 
        (9, 10), (10, 11), (11, 12), (11, 13), (8, 14), (11, 14), (7, 14)
    ]
}
\end{lstlisting}

\subsection{Configurations for LLM agent social intelligence characterization}
In all LLM agent experiments, Node 2 is assigned as the LLM agent. 
\label{confs}
\subsubsection{Configuration 1}
\label{conf1}
The dynamics for this configuration are presented in the main text in Fig. \ref{fig:combined_social_analysis}. 
\begin{lstlisting}[
    language=Python,
    basicstyle=\ttfamily\footnotesize, % Smaller font for wide code
    frame=single,                      % Draws the box
    rulecolor=\color{black},           % Box border color
    breaklines=true,                   % Wraps long lines
    breakatwhitespace=true,            % Only wraps at spaces
    numbers=none,                      % Removes line numbers to save width
    xleftmargin=5pt,                   % Adds internal padding
    xrightmargin=5pt,
    showstringspaces=false             % Removes the visual 'u' symbols in strings
]
CUSTOM_GRAPH_CONFIG_1 = {
    'nodes': [
        {'id': 1, 'intrinsic_competence': {1: 0.6}},
        {'id': 2, 'intrinsic_competence': {1: 0.3}},
        {'id': 3, 'intrinsic_competence': {1: 0.5}}
    ],
    'edges': [
        (1, 2),
        (2, 3),
    ]
}
\end{lstlisting}

We consider additional configurations to show that our mechanism is configuration-agnostic. 

\newpage

\subsubsection{Configuration 2 and its corresponding dynamics}
\label{conf2}

\begin{lstlisting}[
    language=Python,
    basicstyle=\ttfamily\footnotesize, % Smaller font for wide code
    frame=single,                      % Draws the box
    rulecolor=\color{black},           % Box border color
    breaklines=true,                   % Wraps long lines
    breakatwhitespace=true,            % Only wraps at spaces
    numbers=none,                      % Removes line numbers to save width
    xleftmargin=5pt,                   % Adds internal padding
    xrightmargin=5pt,
    showstringspaces=false             % Removes the visual 'u' symbols in strings
]
CUSTOM_GRAPH_CONFIG_2 = {
    'nodes': [
        {'id': 1, 'intrinsic_competence': {1: 0.6}},
        {'id': 2, 'intrinsic_competence': {1: 0.4}},
        {'id': 3, 'intrinsic_competence': {1: 0.5}},
        {'id': 4, 'intrinsic_competence': {1: 0.7}},
        {'id': 5, 'intrinsic_competence': {1: 0.3}},
        {'id': 6, 'intrinsic_competence': {1: 0.3}},
    ],
    'edges': [
        (1, 2),
        (2, 3),
        (3, 4),
        (2, 5),
        (2, 6)    
    ]
}
\end{lstlisting}

\label{sup_llm}
\begin{figure}[htbp] 
    \centering
    \includegraphics[width=0.5\columnwidth]{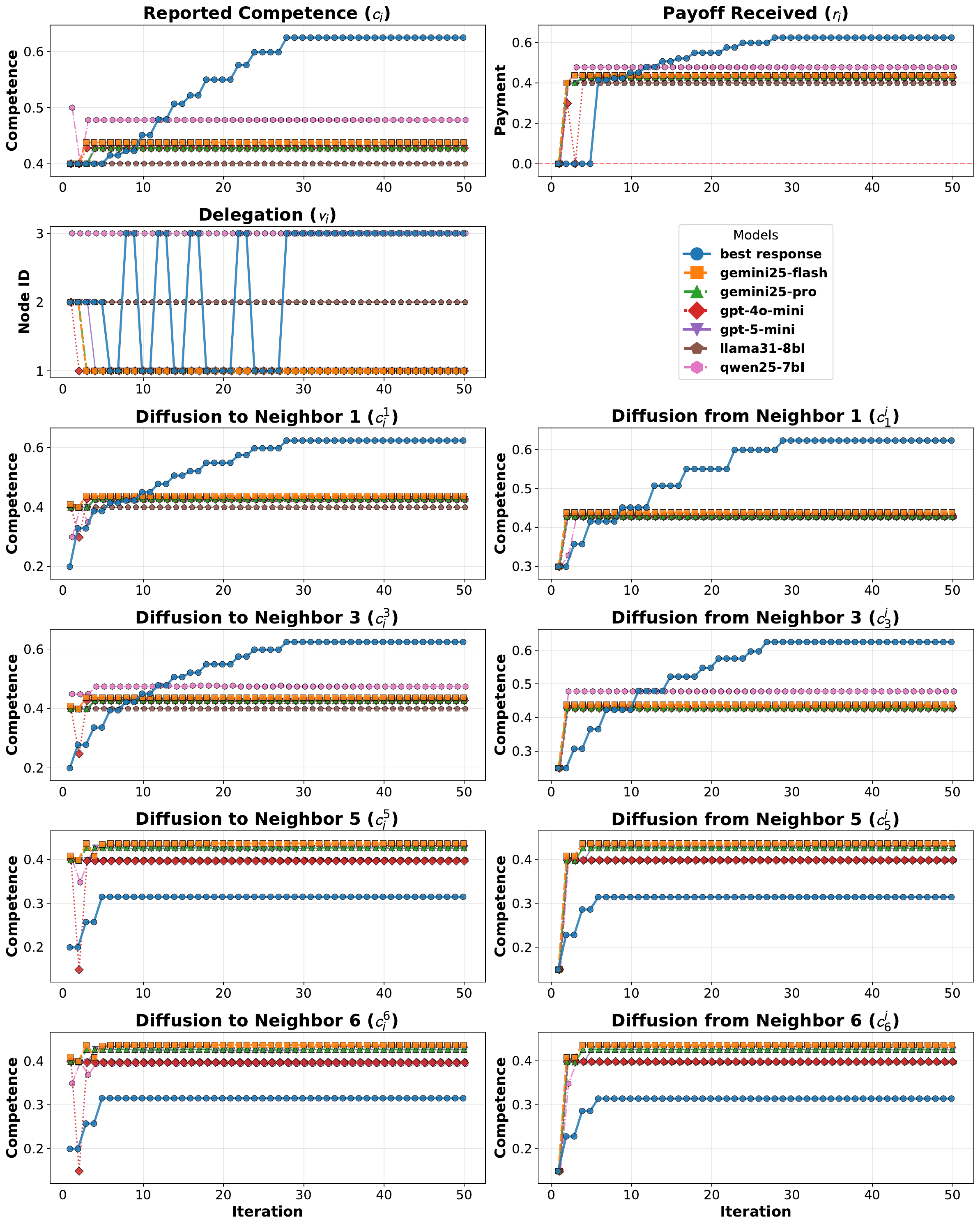}
    \caption{LLM Agent dynamics in \name{AgentSociety} benchmarked against best response for configuration 2}
    \label{fig:llm_plot_2}
\end{figure}

\subsubsection{Configuration 3 and its corresponding dynamics}
\label{conf3}

\begin{lstlisting}[
    language=Python,
    basicstyle=\ttfamily\footnotesize, % Smaller font for wide code
    frame=single,                      % Draws the box
    rulecolor=\color{black},           % Box border color
    breaklines=true,                   % Wraps long lines
    breakatwhitespace=true,            % Only wraps at spaces
    numbers=none,                      % Removes line numbers to save width
    xleftmargin=5pt,                   % Adds internal padding
    xrightmargin=5pt,
    showstringspaces=false             % Removes the visual 'u' symbols in strings
]
CUSTOM_GRAPH_CONFIG_3 = {
    'nodes': [
        {'id': 1, 'intrinsic_competence': {1: 0.7}},
        {'id': 4, 'intrinsic_competence': {1: 0.5}},
        {'id': 3, 'intrinsic_competence': {1: 0.4}},
        {'id': 2, 'intrinsic_competence': {1: 0.65}},
        {'id': 5, 'intrinsic_competence': {1: 0.4}},
        {'id': 6, 'intrinsic_competence': {1: 0.6}},
        {'id': 7, 'intrinsic_competence': {1: 0.5}},


    ],
    'edges': [
        (1, 4),
        (4, 3),
        (3, 2),
        (2, 5),
        (5, 6),
        (6, 7)
    ]
}
\end{lstlisting}

\begin{figure}[htbp] 
    \centering
    \includegraphics[width=0.6\columnwidth]{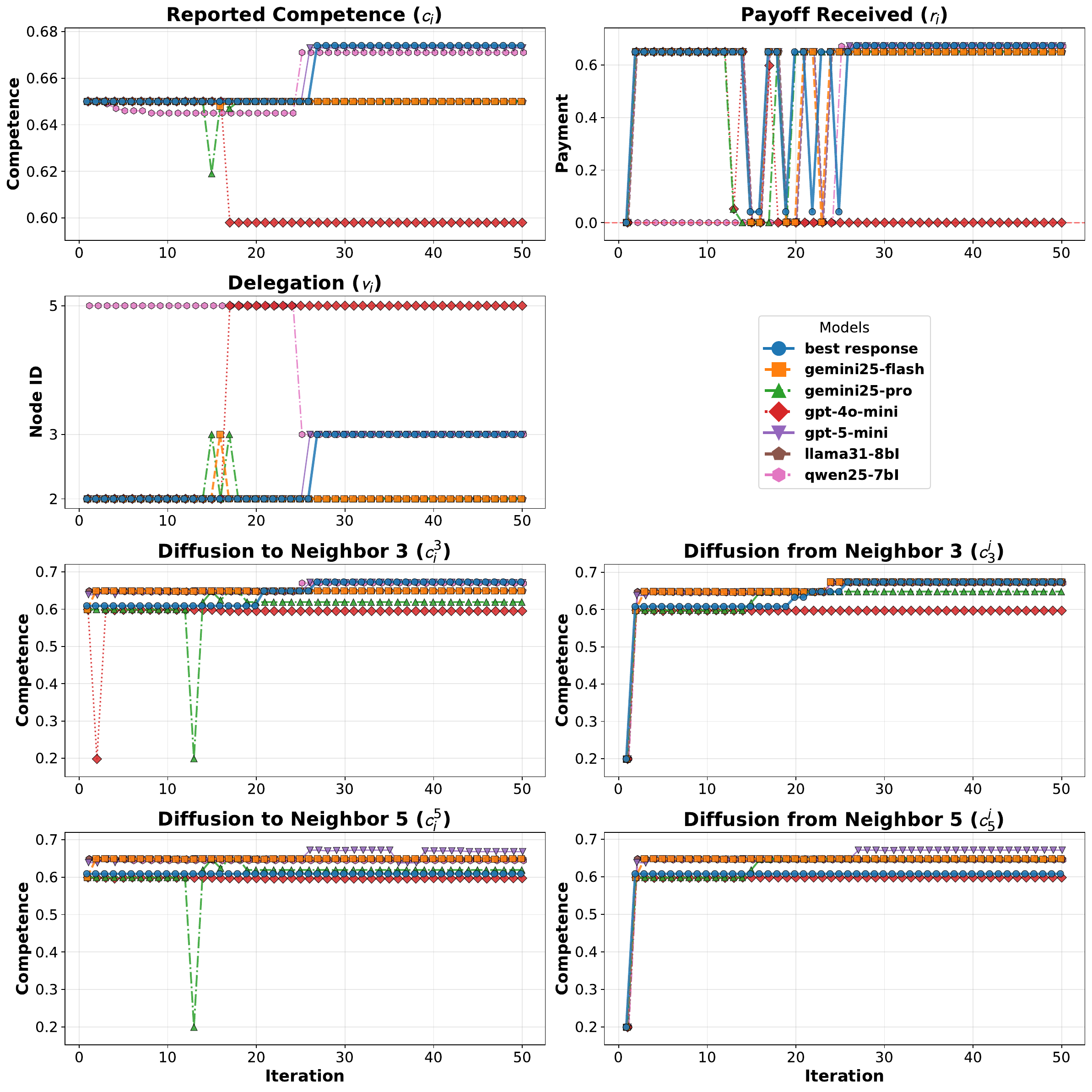}
    \caption{LLM Agent dynamics in \name{AgentSociety} benchmarked against best response for configuration 3}
    \label{fig:llm_plot_3}
\end{figure}

\section{Experimental Details}
We provide details on each of our experimental result here. All experiments were run on CPU with 16 GB RAM with LLM inferences accessed through API calls. Delegation and information diffusion decisions involve two additional calls to each LLM agent per task. In total, num-agents*2 additional calls per task represent the token overhead and cost, while the latency of the task is dictated by the maximum among the agents in the graph. As we leverage multiple heterogeneous LLMs, the latency and token costs vary by provider. Our average token overhead for delegation is ~1200 input tokens and ~20 output tokens while for diffusion the token overhead is ~1400 input tokens and ~45 output tokens, with costs for both in range of USD 0.0001 - USD 0.001 per call. Our highest latency is Gemini’s API driving about ~14s per call.

\subsection{Mechanism Characterization}
\label{mech}
We elaborate on the experiment details to generate Fig. \ref{fig:strategic_analysis} (left). We characterize the mechanism on a population of 17 nodes, where the topology is sampled randomly by partitioning agents into two tasks. Each agent adopts the best response strategy discussed in \S~\ref{intra_br}. Each agent's intrinsic competence $c$ is modeled using a clipped, squashed log-normal distribution ($z \sim \text{LogNormal}(-1.0, 1.0)$, $c = \text{clip}(z/(1+z), 0, 1)$) to reflect heavy-tailed expertise inequality, where a minority of agents possess high competence. The simulation executes for $T=50$ iterations. Each iteration follows a six-stage pipeline: (1) Delegation via the best response strategy; (2) Path Computation for root-to-guru chains; (3) Winner Determination based on aggregate votes; (4) Critical-node Analysis to identify essential agents; (5) Information Diffusion, where non-critical nodes diffuse competence values (6) Competence Update for future competence reports increment by $\Delta_r = 0.02$ subject to the neighborhood's diffused upper bound. To ensure statistical robustness, we perform 100 independent replications across different random graph seeds and report the mean and standard deviation for five key metrics. We track the Strategic Competence Report ($\bar{c}^{\text{rep}}_t$) to measure population-level reporting, User Realized Competence ($\bar{c}^{\text{real}}_t$) for the quality of service delivered by winning gurus, and Max Competence as the theoretical upper bound. Additionally, we log the Intrinsic Competence and the Payoff to Intermediaries ($\bar{p}_t$) to quantify the cost of delegation distributed to non-guru agents along the winning paths, and plot these in Fig. \ref{fig:strategic_analysis} (left).

\subsection{LLM Agent Social Intelligence in \name{AgentSociety}}
\label{si}
We elaborate on the experiment details to generate Fig. \ref{fig:combined_social_analysis} (left). We evaluate each LLM agent's social intelligence by introducing the LLM into \name{AgentSociety} with other nodes in the topography operating on the best response strategy. At each time step, the LLM agent must decide its delegation and diffusion behavior using natural language reasoning given the system context of the \name{AgentSociety} as well as prior history. Such an environment isolates the behavioral effects of replacing a rigid algorithmic participant with a free-form reasoner. We utilize a suite of static graph configurations (3–7 nodes) covering diverse topologies. At each iteration, the LLM agent is prompted twice—once for delegation and once for diffusion using their respective prompts, please see \S~\ref{app:prompts} for the prompts.

 We benchmark each of the 6 LLMs presented in Fig. \ref{fig:combined_social_analysis} (left) with temperature 0.0 (except gpt-5-mini that only takes 1.0) and a 4096 token limit. To ensure longitudinal coherence, the agent’s full per-iteration memory (previous delegation targets, payments received, and diffusion records) is fed back into the subsequent prompt, allowing the LLM to operate on its own historical context. The LLM agent observes payoffs given by Eq. \ref{eq:del_payoff} with $\alpha$ set to 100. Since all other nodes remain deterministic, any deviation from best response provides a clean ablation of the LLM's social intelligence impact on mechanism. We quantify LLM behavioral divergence from the best-response equilibrium across two granularities in the presence of other best response agents: decision-level divergence, representing the cumulative difference in actions given a fixed state, and trajectory-level divergence, representing the cumulative shift in the evolved state itself. These deviations are measured over multiple configurations through the percentage overlap in delegation choices and the mean absolute error (MAE) of diffused information and reported competence in Fig. \ref{fig:combined_social_analysis} (left).

\subsection{Collaborative Performance}
\label{cps}
In each of these experiments, we consider 6 models instantiated in \name{AgentSociety} with a fixed (randomly generated) graph topology (for uniformity). The capability vectors are of the dimension as domains present in the datasets and estimated using a 30\% train split and the evaluation is done on the remaining 70\%. 

\paragraph{MMLU-Pro} We utilize the TIGER-Lab/MMLU-Pro dataset, generating 5-shot predictions for each model via the official answer-extraction protocol and mapped to their 14 subdomains. The evaluation panel includes six diverse models: Llama-3.1-8B-Instruct, Qwen1.5-7B-Chat, Claude-3.5-Sonnet, Gemini-1.5-Pro-002, Gemini-2.0-Flash-Exp, and GPT-4o-Mini.

\begin{lstlisting}[
    language=Python,
    basicstyle=\ttfamily\footnotesize, % Smaller font for wide code
    frame=single,                      % Draws the box
    rulecolor=\color{black},           % Box border color
    breaklines=true,                   % Wraps long lines
    breakatwhitespace=true,            % Only wraps at spaces
    numbers=none,                      % Removes line numbers to save width
    xleftmargin=5pt,                   % Adds internal padding
    xrightmargin=5pt,
    showstringspaces=false             % Removes the visual 'u' symbols in strings
]
  Task IDs: 1=biology, 2=business, 3=chemistry, 4=computer science, 5=economics, 6=engineering, 7=health, 8=history,
  9=law, 10=math, 11=other, 12=philosophy, 13=physics, 14=psychology

  MMLU_PRO_CONFIG = {
      'nodes': [
          {'id': 1, 'intrinsic_competence': {1: 0.8951, 2: 0.7810, 3: 0.8208, 4: 0.8110, 5: 0.8160, 6: 0.6253, 7: 0.7676,
  8: 0.7434, 9: 0.5659, 10: 0.8574, 11: 0.7453, 12: 0.6884, 13: 0.8073, 14: 0.7806}},  # gemini-2.0-flash-exp
          {'id': 2, 'intrinsic_competence': {1: 0.8357, 2: 0.7111, 3: 0.6173, 4: 0.6951, 5: 0.7033, 6: 0.4109, 7: 0.7064,
  8: 0.5855, 9: 0.3682, 10: 0.7259, 11: 0.6775, 12: 0.5678, 13: 0.6262, 14: 0.7210}, },  # gpt-4o-mini
          {'id': 3, 'intrinsic_competence': {1: 0.8881, 2: 0.8032, 3: 0.6350, 4: 0.6951, 5: 0.7982, 6: 0.5943, 7: 0.7706,
  8: 0.7500, 9: 0.5591, 10: 0.5444, 11: 0.7127, 12: 0.7186, 13: 0.8189, 14: 0.8276}, },  # gemini-1.5-pro-002
          {'id': 4, 'intrinsic_competence': {1: 0.9196, 2: 0.7905, 3: 0.8031, 4: 0.8110, 5: 0.8309, 6: 0.6408, 7: 0.7829,
  8: 0.7697, 9: 0.6568, 10: 0.7537, 11: 0.7724, 12: 0.7236, 13: 0.7592, 14: 0.8056}, },  # claude-3.5-sonnet
          {'id': 5, 'intrinsic_competence': {1: 0.6783, 2: 0.4698, 3: 0.3761, 4: 0.4756, 5: 0.5341, 6: 0.2946, 7: 0.5352,
  8: 0.4605, 9: 0.2500, 10: 0.4593, 11: 0.4743, 12: 0.4874, 13: 0.3815, 14: 0.5893}, },  #  Meta-Llama-3_1-8B-Instruct
          {'id': 6, 'intrinsic_competence': {1: 0.5000, 2: 0.2540, 3: 0.1261, 4: 0.2744, 5: 0.4125, 6: 0.1628, 7: 0.2355,
  8: 0.3158, 9: 0.1773, 10: 0.2796, 11: 0.3388, 12: 0.2362, 13: 0.1599, 14: 0.4295}, },  # Qwen1.5-7B-Chat
      ],
      'edges': [(1, 2), (2, 3), (3, 4), (2, 5), (2, 6)],
  }
\end{lstlisting}

\paragraph{Open LLM Leaderboard v2} To prevent data duplication with MMLU-Pro above, we restrict this panel to the 15 non-MMLU sub-tasks, including IFEval, GPQA-Main, MATH (Algebra and Intermediate Algebra Hard), MuSR, and ten BBH sub-tasks. The model suite for this benchmark spans three capacity tiers: Llama-3.1-70B-Instruct, Tulu-3-70B, phi-4, Llama-3.1-8B-Instruct, Yi-1.5-34B-Chat, and Qwen2.5-Coder-7B-Inst.

\begin{lstlisting}[
    language=Python,
    basicstyle=\ttfamily\footnotesize, % Smaller font for wide code
    frame=single,                      % Draws the box
    rulecolor=\color{black},           % Box border color
    breaklines=true,                   % Wraps long lines
    breakatwhitespace=true,            % Only wraps at spaces
    numbers=none,                      % Removes line numbers to save width
    xleftmargin=5pt,                   % Adds internal padding
    xrightmargin=5pt,
    showstringspaces=false             % Removes the visual 'u' symbols in strings
]
Task IDs: 1=ifeval, 2=gpqa_main, 3=math_algebra_hard, 4=math_intermediate_algebra_hard, 5=musr_object_placements,
  6=bbh_boolean_expressions, 7=bbh_date_understanding, 8=bbh_disambiguation_qa, 9=bbh_formal_fallacies,
  10=bbh_geometric_shapes, 11=bbh_hyperbaton, 12=bbh_logical_deduction_five_objects,
  13=bbh_logical_deduction_seven_objects, 14=bbh_logical_deduction_three_objects, 15=bbh_movie_recommendation

  OPENLLM_V2_CONFIG = {
      'nodes': [
          {'id': 1, 'intrinsic_competence': {1: 0.8642, 2: 0.3209, 3: 0.0217, 4: 0.0000, 5: 0.2368, 6: 0.8933, 7: 0.5867,
  8: 0.7067, 9: 0.7867, 10: 0.2800, 11: 0.7467, 12: 0.5600, 13: 0.4800, 14: 0.9067, 15: 0.7733}, 'primary_task': 1},  # Llama-3.1-70B-Instruct
          {'id': 2, 'intrinsic_competence': {1: 0.5679, 2: 0.3433, 3: 0.4239, 4: 0.0714, 5: 0.3026, 6: 0.8933, 7: 0.5600,
  8: 0.7600, 9: 0.6267, 10: 0.4400, 11: 0.6933, 12: 0.5333, 13: 0.4400, 14: 0.7867, 15: 0.6667}, 'primary_task': 1},  # Yi-1.5-34B-Chat
          {'id': 3, 'intrinsic_competence': {1: 0.7037, 2: 0.3209, 3: 0.0000, 4: 0.0000, 5: 0.2105, 6: 0.8933, 7: 0.7200,
  8: 0.6800, 9: 0.7467, 10: 0.5200, 11: 0.8133, 12: 0.6400, 13: 0.5467, 14: 0.9200, 15: 0.6533}, 'primary_task': 1},  # phi-4
          {'id': 4, 'intrinsic_competence': {1: 0.6111, 2: 0.3134, 3: 0.0761, 4: 0.0000, 5: 0.2632, 6: 0.9333, 7: 0.3733,
  8: 0.5467, 9: 0.5867, 10: 0.5733, 11: 0.5333, 12: 0.4000, 13: 0.3067, 14: 0.7333, 15: 0.6267}, 'primary_task': 1},  # Qwen2.5-Coder-7B-Inst
          {'id': 5, 'intrinsic_competence': {1: 0.8457, 2: 0.3806, 3: 0.0000, 4: 0.0000, 5: 0.3289, 6: 0.8133, 7: 0.5600,
  8: 0.6533, 9: 0.6533, 10: 0.3333, 11: 0.6933, 12: 0.3600, 13: 0.3200, 14: 0.8267, 15: 0.8267}, 'primary_task': 1},  # Tulu-3-70B
          {'id': 6, 'intrinsic_competence': {1: 0.4691, 2: 0.3433, 3: 0.2826, 4: 0.0119, 5: 0.3421, 6: 0.7600, 7: 0.3733,
  8: 0.4800, 9: 0.4933, 10: 0.3600, 11: 0.6800, 12: 0.3867, 13: 0.3733, 14: 0.6400, 15: 0.6267}, 'primary_task': 1},  # Llama-3.1-8B-Instruct
      ],
      'edges': [(1, 2), (2, 3), (3, 4), (2, 5), (2, 6)],
  }
\end{lstlisting}

\paragraph{SWE-bench}. We leverage trajectories from princeton-nlp/SWE-bench-Verified test split. Evaluation domains are defined by the top-5 resolved repositories. We benchmark two distinct 6-agent configurations: a Strong/Generalist panel (comprising SWE-agent-Claude-4, Kimi-K2, Augment-Agent, Claude-3.5-Sonnet, DeepSeek-V3, SWE-agent-GPT-4o) and a Weaker/Complementary panel (including Enginelabs, Claude-3.5-Haiku, GRU, Codeshellagent-Gemini, SWERL-Llama3-70B, and Amazon-Nova-Premier).

\begin{lstlisting}[
    language=Python,
    basicstyle=\ttfamily\footnotesize, % Smaller font for wide code
    frame=single,                      % Draws the box
    rulecolor=\color{black},           % Box border color
    breaklines=true,                   % Wraps long lines
    breakatwhitespace=true,            % Only wraps at spaces
    numbers=none,                      % Removes line numbers to save width
    xleftmargin=5pt,                   % Adds internal padding
    xrightmargin=5pt,
    showstringspaces=false             % Removes the visual 'u' symbols in strings
]
  Task IDs: 1=django/django, 2=matplotlib/matplotlib, 3=scikit-learn/scikit-learn, 4=sphinx-doc/sphinx, 5=sympy/sympy

  SWEBENCH_STRONG_CONFIG = {
      'nodes': [
          {'id': 1, 'intrinsic_competence': {1: 0.6848, 2: 0.5385, 3: 0.6667, 4: 0.4118, 5: 0.7000}, },  # Kimi-K2
          {'id': 2, 'intrinsic_competence': {1: 0.5000, 2: 0.3846, 3: 0.6667, 4: 0.2941, 5: 0.5667}, }, # Claude-3.5-Sonnet
          {'id': 3, 'intrinsic_competence': {1: 0.6413, 2: 0.4615, 3: 0.7500, 4: 0.4706, 5: 0.6667}, }, # Augment-Agent
          {'id': 4, 'intrinsic_competence': {1: 0.7065, 2: 0.4615, 3: 0.6667, 4: 0.5882, 5: 0.6000}, }, # SWE-agent-Claude-4
          {'id': 5, 'intrinsic_competence': {1: 0.5000, 2: 0.4615, 3: 0.3333, 4: 0.3529, 5: 0.3000}, }, # DeepSeek-V3
          {'id': 6, 'intrinsic_competence': {1: 0.2609, 2: 0.0000, 3: 0.2500, 4: 0.0000, 5: 0.2667}, }, # SWE-agent-GPT-4o
      ],
      'edges': [(1, 2), (2, 3), (3, 4), (2, 5), (2, 6)],
  }
\end{lstlisting}

\begin{lstlisting}[
    language=Python,
    basicstyle=\ttfamily\footnotesize, % Smaller font for wide code
    frame=single,                      % Draws the box
    rulecolor=\color{black},           % Box border color
    breaklines=true,                   % Wraps long lines
    breakatwhitespace=true,            % Only wraps at spaces
    numbers=none,                      % Removes line numbers to save width
    xleftmargin=5pt,                   % Adds internal padding
    xrightmargin=5pt,
    showstringspaces=false             % Removes the visual 'u' symbols in strings
]
  Task IDs: 1=django/django, 2=matplotlib/matplotlib, 3=scikit-learn/scikit-learn, 4=sphinx-doc/sphinx, 5=sympy/sympy
  
   SWEBENCH_WEAK_CONFIG = {
      'nodes': [
          {'id': 1, 'intrinsic_competence': {1: 0.5000, 2: 0.5385, 3: 0.5833, 4: 0.4118, 5: 0.5000}, }, # Codeshellagent-Gemini
          {'id': 2, 'intrinsic_competence': {1: 0.4674, 2: 0.3846, 3: 0.7500, 4: 0.3529, 5: 0.4333}, }, # GRU
          {'id': 3, 'intrinsic_competence': {1: 0.4674, 2: 0.3077, 3: 0.8333, 4: 0.1765, 5: 0.5333}, }, # Claude-3.5-Haiku
          {'id': 4, 'intrinsic_competence': {1: 0.6413, 2: 0.3846, 3: 0.7500, 4: 0.0000, 5: 0.6667}, }, # Enginelabs
          {'id': 5, 'intrinsic_competence': {1: 0.4674, 2: 0.4615, 3: 0.5833, 4: 0.2941, 5: 0.3667}, }, # Amazon-Nova-Premier
          {'id': 6, 'intrinsic_competence': {1: 0.4239, 2: 0.4615, 3: 0.7500, 4: 0.0588, 5: 0.3667}, }, # SWERL-Llama3-70B
      ],
      'edges': [(1, 2), (2, 3), (3, 4), (2, 5), (2, 6)],
  }
\end{lstlisting}

\subsection{Multi-Task Collaborative Performance}
\label{cpm}
We evaluate multi-task delegation mechanism using evaluation traces from the open-llm-leaderboard/model-details dataset. The model suite spans three capacity tiers—including Large (Llama-3.1-70B-Instruct, Tulu-3-70B, phi-4), Mid (Yi-1.5-34B-Chat), and Small (Llama-3.1-8B-Instruct, Qwen2.5-Coder-7B-Inst) — evaluated across 5 reasoning domains from the IFEval, MATH, and BBH suites. To isolate the routing component's performance, we select these domains where no single model dominates, partitioning each into a 40\% train split for empirical competence estimation and a 60\% held-out test split. The experimental graph consists of 30 nodes (model-domain pairs) in a randomly generated topology. We benchmark multi-task combinations by running the delegation mechanism for 50 iterations with 1,000 sampled test questions per domain to ensure robust performance metrics. The \name{AgentSociety} mechanism runs similarly to \ref{mech}. 

\begin{lstlisting}[
    language=Python,
    basicstyle=\ttfamily\footnotesize, % Smaller font for wide code
    frame=single,                      % Draws the box
    rulecolor=\color{black},           % Box border color
    breaklines=true,                   % Wraps long lines
    breakatwhitespace=true,            % Only wraps at spaces
    numbers=none,                      % Removes line numbers to save width
    xleftmargin=5pt,                   % Adds internal padding
    xrightmargin=5pt,
    showstringspaces=false             % Removes the visual 'u' symbols in strings
]
  MULTI_TASK_CONFIG = {
      'nodes': [
          # ifeval (task=1)
          {'id': 1,  'intrinsic_competence': {1: 0.8611, 2: 0, 3: 0, 4: 0, 5: 0}, 'primary_task': 1},  #
  Llama-3.1-70B-Instruct
          {'id': 2,  'intrinsic_competence': {1: 0.8472, 2: 0, 3: 0, 4: 0, 5: 0}, 'primary_task': 1},  # Tulu-3-70B
          {'id': 3,  'intrinsic_competence': {1: 0.6991, 2: 0, 3: 0, 4: 0, 5: 0}, 'primary_task': 1},  # phi-4
          {'id': 4,  'intrinsic_competence': {1: 0.5741, 2: 0, 3: 0, 4: 0, 5: 0}, 'primary_task': 1},  # Yi-1.5-34B-Chat
          {'id': 5,  'intrinsic_competence': {1: 0.6111, 2: 0, 3: 0, 4: 0, 5: 0}, 'primary_task': 1},  # Qwen2.5-Coder-7B-Inst
          {'id': 6,  'intrinsic_competence': {1: 0.4769, 2: 0, 3: 0, 4: 0, 5: 0}, 'primary_task': 1},  # Llama-3.1-8B-Instruct
          # math_algebra_hard (task=2)
          {'id': 7,  'intrinsic_competence': {1: 0, 2: 0.0000, 3: 0, 4: 0, 5: 0}, 'primary_task': 2},  # Tulu-3-70B
          {'id': 8,  'intrinsic_competence': {1: 0, 2: 0.4180, 3: 0, 4: 0, 5: 0}, 'primary_task': 2},  # Yi-1.5-34B-Chat
          {'id': 9,  'intrinsic_competence': {1: 0, 2: 0.2869, 3: 0, 4: 0, 5: 0}, 'primary_task': 2},  # Llama-3.1-8B-Instruct
          {'id': 10, 'intrinsic_competence': {1: 0, 2: 0.0000, 3: 0, 4: 0, 5: 0}, 'primary_task': 2},  # phi-4
          {'id': 11, 'intrinsic_competence': {1: 0, 2: 0.0574, 3: 0, 4: 0, 5: 0}, 'primary_task': 2},  # Qwen2.5-Coder-7B-Inst
          {'id': 12, 'intrinsic_competence': {1: 0, 2: 0.0164, 3: 0, 4: 0, 5: 0}, 'primary_task': 2},  # Llama-3.1-70B-Instruct
          # bbh_geometric_shapes (task=3)
          {'id': 13, 'intrinsic_competence': {1: 0, 2: 0, 3: 0.5800, 4: 0, 5: 0}, 'primary_task': 3},  # Qwen2.5-Coder-7B-Inst
          {'id': 14, 'intrinsic_competence': {1: 0, 2: 0, 3: 0.4600, 4: 0, 5: 0}, 'primary_task': 3},  # Yi-1.5-34B-Chat
          {'id': 15, 'intrinsic_competence': {1: 0, 2: 0, 3: 0.4800, 4: 0, 5: 0}, 'primary_task': 3},  # phi-4
          {'id': 16, 'intrinsic_competence': {1: 0, 2: 0, 3: 0.3300, 4: 0, 5: 0}, 'primary_task': 3},  # Tulu-3-70B
          {'id': 17, 'intrinsic_competence': {1: 0, 2: 0, 3: 0.2800, 4: 0, 5: 0}, 'primary_task': 3},  # Llama-3.1-70B-Instruct
          {'id': 18, 'intrinsic_competence': {1: 0, 2: 0, 3: 0.3500, 4: 0, 5: 0}, 'primary_task': 3},  # Llama-3.1-8B-Instruct
          # bbh_disambiguation_qa (task=4)
          {'id': 19, 'intrinsic_competence': {1: 0, 2: 0, 3: 0, 4: 0.7900, 5: 0}, 'primary_task': 4},  # Yi-1.5-34B-Chat
          {'id': 20, 'intrinsic_competence': {1: 0, 2: 0, 3: 0, 4: 0.7000, 5: 0}, 'primary_task': 4},  # Llama-3.1-70B-Instruct
          {'id': 21, 'intrinsic_competence': {1: 0, 2: 0, 3: 0, 4: 0.6900, 5: 0}, 'primary_task': 4},  # phi-4
          {'id': 22, 'intrinsic_competence': {1: 0, 2: 0, 3: 0, 4: 0.6600, 5: 0}, 'primary_task': 4},  # Tulu-3-70B
          {'id': 23, 'intrinsic_competence': {1: 0, 2: 0, 3: 0, 4: 0.5800, 5: 0}, 'primary_task': 4},  # Qwen2.5-Coder-7B-Inst
          {'id': 24, 'intrinsic_competence': {1: 0, 2: 0, 3: 0, 4: 0.5000, 5: 0}, 'primary_task': 4},  # Llama-3.1-8B-Instruct
          # bbh_hyperbaton (task=5)
          {'id': 25, 'intrinsic_competence': {1: 0, 2: 0, 3: 0, 4: 0, 5: 0.8100}, 'primary_task': 5},  # phi-4
          {'id': 26, 'intrinsic_competence': {1: 0, 2: 0, 3: 0, 4: 0, 5: 0.7200}, 'primary_task': 5},  # Yi-1.5-34B-Chat
          {'id': 27, 'intrinsic_competence': {1: 0, 2: 0, 3: 0, 4: 0, 5: 0.7100}, 'primary_task': 5},  # Llama-3.1-70B-Instruct
          {'id': 28, 'intrinsic_competence': {1: 0, 2: 0, 3: 0, 4: 0, 5: 0.6700}, 'primary_task': 5},  # Tulu-3-70B
          {'id': 29, 'intrinsic_competence': {1: 0, 2: 0, 3: 0, 4: 0, 5: 0.6800}, 'primary_task': 5},  # Llama-3.1-8B-Instruct
          {'id': 30, 'intrinsic_competence': {1: 0, 2: 0, 3: 0, 4: 0, 5: 0.5200}, 'primary_task': 5},  # Qwen2.5-Coder-7B-Inst
      ],
      'edges': [
          (2, 3), (3, 4), (4, 5), (5, 6),
          (7, 8), (8, 9), (9, 10), (10, 11), (11, 12),
          (13, 14), (14, 15), (15, 16), (16, 17), (17, 18),
          (20, 21), (21, 22), (22, 23), (23, 24),
          (25, 26), (26, 27), (27, 28), (28, 29), (29, 30),
          (2, 7),  (6, 9),     
          (5, 13), (4, 14),  
          (4, 20), (2, 20),    
          (3, 25), (6, 29),   
          (10, 15), (8, 14),   
          (8, 20), (7, 22),    
          (10, 25), (9, 29),   
          (14, 20), (13, 23),  
          (15, 25), (18, 29),  
          (21, 25), (20, 26),  
      ],
  }
\end{lstlisting}

\lstset{
    basicstyle=\ttfamily\small,
    breaklines=true,               
    frame=single,                  
    backgroundcolor=\color{gray!5}, 
    columns=fullflexible,
    keepspaces=true
}

\newpage
\section{LLM Agent Prompts}
\label{app:prompts}

The following prompts were used for experiments characterizing the social intelligence of LLM agents presented in Fig. \ref{fig:combined_social_analysis}, with the same prompts used across all LLM agents. 

\begin{lstlisting}[caption={System Prompt (Describing AgentSociety)}]
You are a node in a liquid democracy network making strategic decisions about voting and information sharing in order to maximise your payment (reward). YOU ARE A UTILITY MAXIMISER WITH YOUR UTILITY = PAYMENT. The game is defined as follows

### Tasks and Competence
- There is a single task to be completed 
- Each node has ACTUAL COMPETENCE for the task (0.0 to 1.0 scale)
- Each node reports competence during delegaion based on which the payment (reward) received is computed. Each node is strategic here to report competence in such a way that the payment is maximized. 
- Each node also diffuses a competence to each of its neighbours (could be different to different neighbours). Each node is strategic here because higher than required diffusion hurts its payment share, while lower than required might not get it on the winning path which would mean that there is no payment altogether. 
- All agents maintain a log of what was diffused, so once a certain competence is diffused, not diffusing any information in the following iterations is the same as diffusing the previously diffused value as neighboring nodes log what was received from each node. 
- The goal of each node is to maximize its payment and not just receive a payment. 
- A positive payment means income and a negative payment means a penalty. Penalty is many times higher than income. 
- As the diffusion happens between neighbors each iteration, there is a delay for the diffused value to reach extended neighbors in the graph to then influence them to change their voting based on diffused value.
- Therefore the agents need to consider that the reward and penalty can sometimes be lagging with respect to the information diffused. 
- A rational agent starts with diffusing low values below its reported competence for payment and increases its diffusion value until required to maximize payment. 
- A rational agent does not report competence for payment to be higher than max(actual competence, competence obtained from neighbors) on the primary task. 

The system operates in the following steps - 
The system broadcasts the task to all nodes.
The first step for each participating node is to vote, either for themselves or for one of their neighbors.
After the voting process concludes, votes aggregate at gurus, defined as agents who vote for themselves. Each guru represents a group of agents that have voted to it, with voting occurring transitively. 
The voting paths are determined by the delegation paths formed by the transitive voting. The winning path is defined as the group with the highest number of agents.
Voting for a neighbor could be beneficial because there is finally only one winning path identified and only agents that make up this winning path have an opportunity to get paid.  The goal is to figure out who to vote for, given the payment mechanism in order to maximise own payment. 
During voting, each node also sends the reported competence, that it can deliver either by itself or leveraging its neighbors by voting to them, to the system that will eventually be used for payment calculation. 
Within this winning path, not all nodes are paid. Only the ones deemed critical, or those that would be part of a (different) winning path in case they had voted for themselves, are paid. These nodes make up the critical path sorted in the order of their competencies. 
After identifying the critical path the framework assigns payments to utilizing the formula below by plugging their respective reported competence into it. 
Let the critical path C_{t_k} be expressed as D_w^{t_k} = (a_1, a_2, ..., a_{k-1}, a_k = w), which denotes the ordered sequence of agents whose vote forwarding was necessary for w to receive the allocation of task t_k. The payment function (positive payment means the agent gets paid) for an agent a_i in D_w^{t_k} is then given by p^(s_{sigma_j})(a_i) equals:

*   I(c_{i+1} >= c_i) * (f(c_i) - f(c_{i-1})), if a_i is in D_w^{t_k} excluding {w};
*   I(c_{i+1} >= c_i) * (f(c_i) - f(c_{i-1})), if a_i = w;
*   0, if a_i is not in D_w^{t_k},

where: c_i is the competence reported of agent a_i on task t_k, c_{i-1} = 0 if a_i = a_1, c_{i+1} = true competence of winner if a_i = w. I(.) is the indicator function, returning +1 if the condition holds and alpha(negative value) otherwise, f(.) is a monotone function modulating payments, and c is a baseline cost to the winning agent w to execute the task. Intuitively, the payment policy rewards or penalizes intermediaries according to the incremental competence they contribute along the vote chain.
After the payment is made, all agents are allowed to make a decision based on their current information state and payment outcome on whether they want to diffuse information to their neighbors to improve their own utility or payment. 

\end{lstlisting}

\begin{lstlisting}[caption={Delegation Prompt}]
You are rational and intelligent Agent {node_id} in a system of all rational and intelligent agents making a delegation decision and reporting a competence for payment based on your delegation decision.

YOU NEED TO PAY ATTENTION AND UNDERSTAND TO THE PAYMENT FUNCTION IN ORDER TO MAXIMIZE YOUR PAYMENT for both your delegation decision as well as competence for payment. 

## Your Information
- Your Primary Task: Task {primary_task}
- Your Actual Competence: {intrinsic_competence}
- Tasks Being Voted On: {tasks}

## Your Memory (History from Previous Iterations)
{memory_info}

## Your Neighbors competence (same primary task only)
{neighbors_info}

## Decision Required
You must choose  for Task {primary_task}, who to delegate your vote. Your options are:
1. Vote for yourself (delegate to yourself)
2. Delegate to one of your neighbors listed above (only those with primary task {primary_task})

If you choose to vote for yourself, use your own node ID ({node_id}).

Based on who you delegated to, report a corresponding competence which determines your payment. 

Use your memory to learn from past delegation choices and their resulting payments. 
You need to ensure that you delegate to a neighbor that at least can guarantee the performance you have already diffused. Otherwise your neighbors will no longer trust you to be accurate in your claims.

## Response Format
Respond with ONLY a JSON object. ALL three fields are REQUIRED and **DO NOT PROVIDE AN EMPTY RESPONSE**:
{{"delegate_to": <node_id>, "competence_for_payment": {{<task_id>: <competence_value>}}, "reasoning": "brief explanation"}}

Example response:
{{"delegate_to": 5, "competence_for_payment": {{"1": 0.7}}, "reasoning": "Node 5 has ..."}}

Field descriptions:
- delegate_to: The node ID you are delegating to (use your own ID {node_id} if voting for yourself)
- competence_for_payment: REQUIRED. A dictionary mapping each task_id (integer) to the competence value (float) you want to report for payment calculation. Include ALL tasks from {tasks}. 
- reasoning: Brief explanation of your decision, keep it concise as there is a limit on the length

\end{lstlisting}

\begin{lstlisting}[caption={Diffusion Prompt}]

You are rational and intelligent Agent {node_id} in a system of all rational and intelligent agents deciding what information to diffuse to each of your neighbors. 
Bear in mind that diffusion at this iteration can depend on the delegation decision of the same iteration. 

## Your Information
- Your Primary Task: Task {primary_task}
- Your Actual Competence: {intrinsic_competence}
- Delegation Decision of current iteration: {delegated_to_info}
- Competence Reported for Payment (current iteration): {competence_for_payment_info}

## Your Memory (History from Previous Iterations)
{memory_info}

## Payment Outcome
{payment_status}

## Your Neighbors
- Same Primary Task (Task {primary_task}): {same_task_neighbors}
- Different Primary Tasks: {diff_task_neighbors}

## Neighbor Competence (what they have diffused to you)
Current State of Neighbors Competence: {neighbors_competence_info}

- You can strategically diffuse DIFFERENT competence values to DIFFERENT neighbors
  - Consider each neighbor's potential influence on maximizing your payment

- Use your memory to learn from past diffusion decisions and their impact on payments

## For example
If this is the received information -

# Your Information
- Your Primary Task: Task 1
- Your Actual Competence: {{"1": 0.2}}
- Delegation Decision of current iteration: You delegated to Node 5
- Competence Reported for Payment (current iteration): {{"1": 0.439}}

# Current State of Neighbors Competence: - Node 3 (Primary Task 1): Competence = {{"1": 0.419}}
- Node 5 (Primary Task 1): Competence = {{"1": 0.439}}

Here, diffusing a higher value of 0.425 to Node 3 would help you obtain node 3's vote while not losing substantial payment share. In this case, you would be acting as an intermediary transferring information from one node to another. 


## Decision Required
Decide whether to diffuse information and what competence values to diffuse to EACH neighbor individually.

## Response Format
Respond with ONLY a JSON object in this exact format:
{{
  "should_diffuse": true/false,
  "neighbor_updates": {{
    "<neighbor_id>": {{"<task_id>": competence_value, ...}},
    "<neighbor_id>": {{"<task_id>": competence_value, ...}}
  }},
  "reasoning": "brief explanation of strategy"
}}

Example:
{{
  "should_diffuse": true,
  "neighbor_updates": {{
    "2": {{"1": 0.7}},
    "5": {{"1": 0.5}}
  }},
  "reasoning": "Reporting higher competence to node 2 because of ..."
}}

If should_diffuse is false, the updates will be ignored.
You must provide an update for each neighbor you want to diffuse information to.
\end{lstlisting}

\end{document}